\spnewtheorem{thm}{Theorem}[section]{\bfseries}{\itshape}
\colorlet{shadecolor}{gray!20}
\DeclareMathAlphabet{\mathsl}{OT1}{cmr}{m}{sl}
\long\def\omitthis#1{\relax}
\newcommand\longversion[1]{#1}
\newcommand{\red}[1]{\textcolor{Mahogany}{#1}}
\newcommand{\blue}[1]{\textcolor{NavyBlue}{#1}}
\long\def\comment#1{}
\newcommand{\eg}{{\em e.g.}}
\newcommand{\ie}{{\em i.e.}}
\newcommand{\Ascr}{\mathcal{A}}
\newcommand{\Cscr}{\mathcal{C}}
\newcommand{\Sscr}{\mathcal{S}}
\newcommand{\Rscr}{\mathcal{R}}
\newcommand{\Tscr}{\mathcal{T}}
\newcommand{\Wscr}{\mathcal{W}}
\newcommand{\Escr}{\mathcal{E}}
\newcommand{\CS}{\mathcal{CS}}
\newcommand{\quotes}[1]{``#1''}
\long\def\comment#1{\relax}
\newcommand\GS{\mathcal{GS}}
\newcommand\At{\ensuremath{\mathsf{At}}\xspace}
\newcommand\Time{\ensuremath{\mathsf{Time}}\xspace}
\newcommand\Tick{\ensuremath{\mathsf{Tick}}\xspace}
\newcommand\Event{\ensuremath{\underline{\mathsf{Event}}}\xspace}
\newcommand\Attended{\ensuremath{\underline{\mathsf{Attended}}}\xspace}
\newcommand\Flight{\ensuremath{\mathsf{Flight}}\xspace}
\newcommand\airport{\ensuremath{\mathsf{airport}}\xspace}
\newcommand\Center{\ensuremath{\mathsf{center}}\xspace}
\newcommand\main{\ensuremath{\mathsf{main}}\xspace}
\newcommand\no{\ensuremath{\mathsf{no}}\xspace}
\newcommand\id{\ensuremath{\mathsf{id}}\xspace}
\newcommand\done{\ensuremath{\mathsf{done}}\xspace}
\newcommand\FRA{\ensuremath{\mathsf{FRA}}\xspace}
\newcommand\DBV{\ensuremath{\mathsf{DBV}}\xspace}
\newcommand{\gap}{\hspace*{1pt} \\}
\newcommand{\todo}[1]{\blue{\textbf{To-do}: #1}}
\newcommand{\jesse}[1]{\textcolor{red}{\textbf{JC:} #1}}
\newcommand{\tajana}[1]{\textcolor{red}{\textbf{TBK:} #1}}
\newcommand{\varmax}{\eta}
\newcommand{\smalltt}[1]{{\small\texttt{#1}}}
\newcommand{\var}{\textrm{Var}}
\newcommand{\cons}{\textrm{Cons}}
\newcommand{\Gcons}{\mathcal{G}}
\newcommand{\GTerms}{\mathcal{G}_{\textrm{Terms}}}
\newcommand{\values}{\textrm{Values}}
\newcommand{\fresh}{\textrm{Fresh}}
\newcommand{\FOvar}{\mathcal{V}_{\textrm{FO}}}
\newcommand{\Tvar}{\mathcal{V}_{\Time}}
\newcommand{\dom}{\textrm{dom}}
\renewcommand{\jesse}[1]{}
\renewcommand{\tajana}[1]{}
\renewcommand{\todo}[1]{}
\newcommand\SUR{\ensuremath{\mathsf{SUR}}\xspace}
\newcommand\GUR{\ensuremath{\mathsf{GUR}}\xspace}
\newcommand{\tup}[1]{\langle#1\rangle}
\newcommand\lra{\longrightarrow}
\title{Technical Report: Time-Bounded Resilience}
\author{
Tajana {Ban Kirigin}\inst{1}
\and
Jesse Comer\inst{2}
\and
Max Kanovich\inst{3}
\and
Andre Scedrov\inst{2}
\and 
Carolyn Talcott\inst{4}
}
\institute{
Faculty of Mathematics, University of Rijeka, HR \email{bank@math.uniri.hr}
\and
 University of Pennsylvania, Philadelphia, USA
\email{scedrov@math.upenn.edu,jacomer@seas.upenn.edu}
 \and
University College, London, UK 
\email{m.kanovich@ucl.ac.uk}
\and
  SRI International, USA
    \email{carolyn.talcott@sri.com}
}
\begin{document}
\maketitle
\begin{abstract}
Most research on system design has focused on optimizing \emph{efficiency}. However, insufficient attention has been given to the design of systems optimizing \emph{resilience}, the ability of systems to adapt to unexpected changes or adversarial disruptions. In our prior work, we formalized the intuitive notion of resilience as a property of cyber-physical systems by using a multiset rewriting language with explicit time. In the present paper, we study the computational complexity of a formalization of \emph{time-bounded} resilience problems for the class of \emph{$\varmax$-simple} progressing planning scenarios, where, intuitively, it is simple to check that a system configuration is critical, and only a bounded number of rules can be applied in a single time step. We show that, in the time-bounded model with $n$ (adversarially-chosen) disruptions, the corresponding time-bounded resilience problem for this class of systems is complete for the $\Sigma^\P_{2n+1}$ class of the polynomial hierarchy, \PH. To support the formal models and complexity results, we perform automated experiments for time-bounded verification using the rewriting logic tool Maude.
\end{abstract}

\section{Introduction}
\label{sec:intro}
Resilience is \quotes{the ability of a system to adapt and respond to change (both environmental and internal)}~\cite{bloomfield2020towards}. In recent years, the task of formally defining and analyzing this intuitive notion has drawn interest across domains in computer science, ranging from systems engineering~\cite{madni2009towards,neches2013towards}, particularly cyber-physical systems (CPS)~\cite{bennaceur2019modelling,koutsoukos2018sure}, to artificial intelligence~\cite{olowononi2021resilient,eigner2021towards,prasad2022towards,goel2024adversarial}, programming languages~\cite{cunningham2014resilient,hukerikar2012programming}, algorithm design~\cite{ferarro2010experimental,caminiti2017resilient}, and more. Our previous work in \cite{Alturki2022resilience} was particularly inspired by Vardi's paper \cite{vardi2020efficiency}, in which he articulated a need for computer scientists to reckon with the trade-off between efficiency and resilience.

In \cite{Alturki2022resilience}, we formalized resilience as a property of timed multiset rewriting (MSR) systems~\cite{kanovich17mscs,kanovich2021Guttman}, which are suitable for the specification and verification of various goal-oriented systems. Although the related verification problems are undecidable in general, it was shown that these problems are \PSPACE-complete for the class of \emph{balanced} systems, in which facts are of bounded size, and rewrite rules do not change configuration size. A primary challenge in~\cite{Alturki2022resilience} was the formalization of the disruptions against which systems must be resilient. This was achieved by separating the system from the environment, delineating between rules applied by the system and those imposed on the system, such as changes in conditions, regulations, or mission objectives.

\subsubsection{Main Contributions.}
This paper formalizes the notion of \emph{time-bounded} resilience. We focus on the class of $\varmax$-simple progressing planning scenarios (PPS) and investigate the computational complexity of the corresponding verification problem. Time-bounded resilience is motivated by bounded model checking and automated experiments, which can help system designers verify properties and find counterexamples where their specifications do not satisfy time-bounded resilience. Moreover, bounded versions of resilience problems arise naturally when the missions of the systems being modeled are necessarily bounded at some level. The main contributions of the paper are as follows.
\begin{enumerate}
\item We define time-bounded resilience as a property of planning scenarios. Intuitively, a resilient system can accomplish its mission within the given time bounds, even in the presence of a bounded number of disruptions (cf. Definition \ref{d-trace-ab-recursive}).
\item We investigate the computational complexity of time-bounded resilience problems, showing that for the class of $\varmax$-simple PPSs with facts of bounded size~\cite{kanovich17jcs}, the time-bounded resilience problem with $n$ updates is complete for the $\Sigma^\P_{2n+1}$ class of the polynomial hierarchy, $\PH$ (Corollary \ref{thm:sigma2n+1-complete}).
\item We demonstrate that our formalization can be automated, using the rewriting logic tool Maude to perform experiments verifying time-bounded resilience (Section \ref{sec:implementation}).
\end{enumerate}

\subsubsection{Expository Example.} In \cite{Alturki2022resilience}, our study of resilience was motivated by current research into CPSs that perform complex, safety-critical tasks in hostile and unpredictable environments, often autonomously. In this paper, we expand our perspective to consider resilience properties of a broad class of multi-agent systems. For expository purposes, we utilize a running example of a researcher planning travel to attend and present research at a conference. The system rules represent actions of the researcher, while update rules represent travel disruptions and changes to the conference organization. Ultimately, the travel planning process is pointless if the researcher does not arrive at his destination in time for the main event. Consequently, the researcher desires to establish a \emph{resilient} plan, which will allow him to accomplish his goal in spite of some bounded number of disruptions. Details of this planning scenario will be developed throughout Section~\ref{sec:msr}, and our Maude implementation in Section~\ref{sec:implementation} will be used to analyze its resilience.

\subsubsection{Outline.} Section~\ref{sec:msr} reviews the timed MSR framework used in Section~\ref{sec:bounded} to define time-bounded resilience. In Section~\ref{sec:complexity-b-time}, we investigate the complexity of the verification problem. Section~\ref{sec:implementation} showcases our results on automated verification obtained using Maude. In Section~\ref{sec:future}, we conclude with a discussion of related and future work.

\section{Multiset Rewriting Systems}
\label{sec:msr}
In this section, we review the framework of timed MSR models introduced in our previous work~\cite{kanovich16formats,kanovich17jcs,kanovich17mscs}.

\subsection{The Rewriting Framework}

\subsubsection*{Terms and Formulas.} We fix a finite first-order alphabet $\Sigma$ with constant, function, and predicate symbols, together with a finite set $\mathcal{B}$ of \emph{base types}. Each constant is associated with a unique base type, and we write $\Sigma_{\cons}$ to denote the set of all constants in $\Sigma$. Each predicate symbol $R$ (resp. function symbol $f$) is associated with a unique \emph{tuple type} (resp. \emph{arrow type}) $b_1 \times \hdots \times b_k$ (resp. $b_1 \times \hdots \times b_k \to b$), where $b_1,\hdots,b_k,b \in \mathcal{B}$ and $k$ is the arity of $R$ (resp. $f$). We also assume that $\Sigma$ contains a special predicate symbol $\Time$ with arity zero (more on this later).

We fix sets $\FOvar$ of (first-order) \emph{variables} and $\Gcons$ of \emph{ground constants}, disjoint from each other and from $\Sigma$, where each element in $\FOvar \cup \Gcons$ has an associated base type in $\mathcal{B}$. We further assume that $\FOvar$ and $\Gcons$ each contain countably infinitely-many elements associated to each base type. These ground constants will be used to instantiate variables ``created'' by rewrite rules. \emph{Terms} over $\Sigma$ are constructed according to the grammar
$$t := x \mid c \mid f(t_1,\hdots,t_k),$$
where $x$ is in $\FOvar$, $c$ is in $\Sigma_{\cons}$, $f$ is a function symbol of type $b_1 \times \hdots \times b_k \to b$, and each $t_i$ is a term of type $b_i$ for $i \leq k$ (in which case $f(t_1,\hdots,t_k)$ is a term of type $b$). \emph{Ground terms} over $\Sigma$ are constructed similarly:
$$a := d \mid c \mid f(a_1,\hdots,a_k),$$
where $d$ is in $\Gcons$, $c$ is in $\Sigma_{\cons}$, $f$ is a function symbol of type $b_1 \times \hdots \times b_k \to b$, and each $a_i$ is a ground term of type $b_i$ for $i \leq k$ (in which case $f(a_1,\hdots,a_k)$ is a ground term of type $b$). We write $\GTerms$ for the collection of ground terms over $\Sigma$. If $R$ is a predicate symbol of type $b_1 \times \hdots \times b_k$ and $t_1,\hdots,t_k$ are terms of type $b_1,\hdots,b_k$, respectively, then $R(t_1,\hdots,t_k)$ is an \emph{atomic formula}. Similarly, if $a_1,\hdots,a_k$ are ground terms of type $b_1,\hdots,b_k$, respectively, then $R(a_1,\hdots,a_k)$ is an \emph{atomic fact} (or just \emph{fact}).

\subsubsection*{Modeling Discrete Time.} We fix a countably infinite set $\Tvar = \{ T_i \mid i \in \mathbb{N} \}$ of \emph{time variables}. \emph{Timestamped atomic formulas} are of the form $F@(T+D)$, where $F$ is an atomic formula, $T$ is a time variable, and $D$ is a natural number; note that if $D = 0$, we prefer to write $F@T$ instead of $F@(T+0)$. \emph{Timestamped facts} are of the form $F@t$, where $F$ is an atomic fact and $t \in \mathbb{N}$ is its \emph{timestamp}. For brevity, we frequently refer to timestamped facts simply as facts. Clearly, given a timestamped atomic formula $F@(T+D)$, we can obtain a timestamped fact $G@t$ by uniformly substituting ground terms for variables in $F$ and setting $t = N+D$ for some natural number $N$.

\subsubsection*{Configurations and Rewrite Rules.}
\emph{Configurations} are multisets of timestamped facts $\Sscr = \{\Time@t, F_1@t_1, \ldots, F_n@t_n\}$ with exactly one occurrence of a $\Time$ fact whose timestamp $t$ is the \emph{global time} in $\Sscr$. We write $\values(\Sscr)$ to denote the set of all ground terms and timestamps occurring in $\Sscr$. Configurations are modified by \emph{multiset rewrite rules}. Only one rule, $\Tick$, modifies global time:
\begin{equation}
\label{eq:tick}
  \Time@T \lra \Time @ (T+1)
\end{equation}
where $T$ is a time variable. The $\Tick$ rule modifies a configuration to which it is applied by advancing the global time by one. The remaining rules are \emph{instantaneous} in that they do not modify the global time but may modify the remaining facts of a configuration. Instantaneous rules are given by expressions of the form
\begin{equation}
\begin{array}{l}
 \Time@T, \Wscr, 
 {F_1@T_1}, \ldots 
 {F_n@T_n} \mid \Cscr \\
 \quad \lra \Time@T, \Wscr, 
 {Q_1@(T + D_1)}, \ldots 
 {Q_m@(T + D_m)}
\label{eq:instantaneous}
\end{array}
\end{equation}
where $\Wscr$ (the \emph{side condition} of the rule) is a multiset of timestamped atomic formulas, $F_i@T_i$ is a timestamped atomic formula for each $i \leq n$, and $Q_j@(T+D_j)$ is a timestamped atomic formula for each $j \leq m$. The \emph{precondition} of the rule is the multiset $\{\Time@T\} \cup \Wscr \cup \{ F_i@T_i \mid i \leq n \}$, while its \emph{postcondition} is the multiset $\{\Time@T\} \cup \Wscr \cup \{ Q_j@(T+D_j) \mid j \leq m \}$. We require that no atomic formula in the multiset $\{ F_i@T_i \mid i \leq n \}$ appears with the same multiplicity as it appears in the multiset $\{ Q_j@(T+D_j) \mid j \leq m \}$. Furthermore, no timestamped atomic formulas containing the predicate $\Time$ can occur in $\{ F_i@T_i \mid i \leq n \} \cup \{ Q_j@(T+D_j) \mid j \leq m \}$. The \emph{guard} $\Cscr$ of the rule is a set of \emph{time constraints} of the form
$$T_1 > T_2 \pm N \quad \textrm{or} \quad T_1 = T_2 \pm N,$$
where $T_1$ and $T_2$ are time variables and $N\in\mathbb{N}$ is a natural number; all constraints in $\Cscr$ must involve only the time variables occurring in the rule's precondition.

A \emph{ground substitution} is a partial map $\sigma: \FOvar \cup \Tvar \to \GTerms \cup \mathbb{N}$ which maps first-order variables to ground terms and time variables to natural numbers. Given a multiset $W$ of timestamped atomic formulas, we write $W\sigma$ to denote the multiset of timestamped facts obtained by simultaneously substituting all first-order variables and time variables in $W$ with their image under $\sigma$. Given a set $\Cscr$ of time constraints with time variables from $W$, we say that $\Cscr \sigma$ is \emph{satisfied} if each time constraint in $\Cscr$ evaluates to true for the substituted timestamps. Given a multiset $W$ of timestamped atomic formulas, we write $\var(W)$ to denote the set of first-order variables and time variables occurring in $W$. Given an instantaneous rule $r$ given by $W \mid \Cscr \lra W'$, we write $\fresh(r)$ to denote the set $\var(W') \setminus \var(W)$.

A ground substitution matching an instantaneous rule $r$ given by $W \mid \Cscr \lra W'$ to a configuration $\Sscr$ is a ground substitution $\sigma$ with $\dom(\sigma) = \var(W \cup W')$ such that every element of $\var(W)$ is mapped to an element in $\values(\Sscr)$, and the restriction of $\sigma$ to $\fresh(r)$ is an injective map whose range is contained in $\Gcons \setminus \values(\Sscr)$. In other words, $\sigma$ assigns first-order variables (resp. time variables) occurring in $W$ to ground terms (resp. timestamps) occurring in $\Sscr$, and each distinct first-order variable in $\fresh(r)$ to a \emph{fresh} ground constant which does not occur in $\Sscr$.

An instantaneous rule $r$ given by $W \mid \Cscr \lra W'$ is \emph{applicable} to a configuration $\Sscr$ if there exists a ground substitution matching $r$ to $\Sscr$ such that $W\sigma \subseteq \Sscr$ and $\Cscr\sigma$ is satisfied; in this case, we refer to the expression $r\sigma$ given by $W\sigma \mid \Cscr\sigma \lra W'\sigma$ as an \emph{instance} of the rule $r$. The result of applying the rule instance $r\sigma$ to $\mathcal{S}$ is the configuration $(\Sscr \setminus W\sigma) \cup W'\sigma$. If $\Wscr$ is the side condition of $r$, and $T$ is the global time in $\Sscr$, then we say that the timestamped facts occurring in $(W \setminus (\Wscr \cup \{\Time@T\})) \sigma$ are \emph{consumed}, while those in $(W' \setminus (\Wscr \cup \{\Time@T\})) \sigma$ are \emph{created}. Note that a fact for the predicate $\Time$ is never created by an instantaneous rule. We write $\Sscr \lra_r \Sscr'$ for the one-step relation where the configuration $\Sscr$ is rewritten to $\Sscr'$ using an instance of the rule $r$. It is worth emphasizing, at this point, that configurations are \emph{grounded}, while rewrite rules are \emph{symbolic}.


\subsubsection*{Some examples.}
We now give some examples to elucidate our formalism. Consider the alphabet containing the predicate symbols $\Time$, $\At$, $\Event$, $\Attended$, and $\Flight_D$ (where $D \in \{1,\hdots,12\}$), and the constant symbols $\no$, $\done$, $\main$, $\airport$, $\Center$, $\id_{14}$, and $\id_{215}$. Recall that, in our expository example, we are modeling a researcher with a goal of traveling to attend a conference. We interpret the timestamped atomic formula $\Flight_D(id, c_1,c_2)@T$ to mean that the flight with flight id $id$ from city $c_1$ to city $c_2$ departs at time $T$ and has a duration of approximately $D$ hours.

The timestamped fact $\At(\FRA,\Center)@0$ is interpreted to mean that the researcher is at Frankfurt city center at the initial time step $0$. For this scenario, each time step is interpreted as the passage of one minute. For ease of readability, we adopt a more convenient representation of timestamps, with $0$ denoting midnight on the initial day of the planning scenario. Then, we write $\Time@(3d~\text{14:42})$ to indicate that the current time is 14:42 on the $3^{\text{rd}}$ day of the scenario. We do this is in lieu of writing the more burdensome timestamp $\Time@5202$. The fact $\Event(\main, \id_{215})@(5d~\text{12:00})$ specifies that the main event of the conference, with event identifier $215$, will take place at noon on the $5^{\text{th}}$ day. Bringing this all together, consider the following configuration.
\begin{equation}
\label{eq:ex-config1}
\begin{array}{l}
\{
\Time@(3d~\text{14:42}), \Attended(\main,\no)@0,
~\At(\FRA, \airport)@(3d~\text{14:05}),
\\
\Event(\main,\id_{215})@(5d~\text{12:00}), \Flight_2(\id_{14},\FRA, \DBV)@(3d~\text{15:25})
\} 
\end{array}
\end{equation}

This configuration describes a state of the system. The time is 14:42 on the $3^{\text{rd}}$ day of the scenario, and the researcher arrived at Frankfurt airport ($\FRA$) at 14:05. The main event of the conference is at noon in two days in Dubrovnik ($\DBV$), and has, obviously, not yet been attended by the researcher. Flight $\id_{14}$ is a direct flight from Frankfurt to Dubrovnik, which departs at 15:25 and has a duration of approximately two hours.

In addition to modeling states of the system via configurations, we also want our formalism to be able to model actions taken by the researcher, such as boarding a given flight. To this end, consider the rule
\begin{equation}
\label{eq:ex-rule}
\begin{array}{l}
\Time@T,  
\Flight_2(a,x,y)@T_1, \At(x, \airport)@T_2,
\, \mid \, T = T_1, T_2+30 \leq  T 
\\
\qquad \qquad \lra 
\Time@T, 
\Flight_2(a,x,y)@T_1, \At(y, \airport)@(T+120),
\end{array} 
\end{equation}
with side condition $\{ \Flight_2(a,x,y)@T_1 \}$. This rule means that if the departure time of a two-hour flight with flight id $a$ from city $x$ to city $y$ will depart at time $T$, and the researcher is at the airport in city $x$ at time $T_2$, where $T_2$ is at least $30$ minutes prior to $T$, then he can take the flight, arriving at the airport in city $y$ after two hours.

Note that the rule (Eq.~\ref{eq:ex-rule}) is not applicable to the configuration (Eq.~\ref{eq:ex-config1}). In particular, the time constraint $T=T_1$ cannot be satisfied by any ground assignment for the rule to the configuration. However, rule (Eq.~\ref{eq:ex-rule}) \emph{is} applicable to the configuration resulting from the successive application of 43 $\Tick$ rules to configuration (Eq.~\ref{eq:ex-config1}), which results in the same configuration, except with the timestamp for $\Time$ updated to $(3d~\text{15:25})$ (i.e., the departure time of the flight). Then the ground substitution $\sigma$ given by
\begin{align}
\sigma(T) &= 3d~\text{15:25} & \sigma(a) &= \id_{14} \notag \\
\sigma(T_1) &= 3d~\text{15:25} & \sigma(x) &= \FRA \notag \\
\sigma(T_2) &= 3d~\text{14:05} & \sigma(y) &= \DBV \notag
\end{align}
applied to the rule (Eq.~\ref{eq:ex-rule}) yields an instance which can be applied to configuration (Eq.~\ref{eq:ex-config1}), resulting in the following configuration:
\begin{equation}
\label{eq:ex-config2}
\begin{array}{l}
\{
\Time@(3d~\text{15:25}), \Attended(\main,\no)@0,
~\At(\DBV, \airport)@(3d~\text{17:25}),  \\
\Event(\main,\id_{215})@(5d~\text{12:00}), 
\Flight_2(\id_{14},\FRA, \DBV)@(3d~\text{15:25})
\}.
\end{array}
\end{equation}

\subsubsection*{Timed MSR Systems.}
We now turn to the timed MSR systems introduced in \cite{kanovich17mscs}.

\begin{definition}
A \emph{timed MSR system} 
$\Ascr$ is a set of rules containing only instantaneous rules (Eq.~\ref{eq:instantaneous}) and the $\Tick$ rule (Eq.~\ref{eq:tick}).
\end{definition}

A sequence of consecutive rule applications represents an execution or process within the system. A \emph{trace} of timed MSR rules $\Ascr$ starting from an initial configuration $\Sscr_0$ is a sequence of configurations:
\(
\Sscr_0 \lra \Sscr_1 \lra \Sscr_2 \lra \cdots \lra \Sscr_n 
\),
such that for all $0 \leq i \leq n-1$, $\Sscr_{i} \lra_{r_i} \Sscr_{i+1}$ for some $r_i \in \Ascr$. For our complexity results, we assume traces are annotated with the rule instances used to obtain the next configuration in the trace, so valid traces can be recognized in polynomial time (cf. Remark \ref{remark:poly-verification}).

Reachability problems for MSR systems are to determine whether or not a trace from some initial configuration to some specified configuration exists. In general, these problems are often undecidable\cite{kanovich17mscs}, and so restrictions are imposed in order to obtain decidability\footnote{For a discussion of various conditions in the model that may affect complexity, see~\cite{kanovich17jcs,kanovich17mscs}.}.
In particular, we use MSR systems with only \emph{balanced} rules.

\begin{definition}[Balanced Rules, \cite{kanovich17mscs}]
\label{def:balanced}
A timed MSR  rule is \emph{balanced} if the numbers of facts on left and right sides of the rule are equal.
\end{definition}

\noindent Systems containing only balanced rules represent an important class of \emph{balanced systems}, for which several reachability problems have been shown to be decidable~\cite{kanovich17jcs}. Balanced systems are suitable, \eg, for modeling scenarios with a fixed amount of total memory. Balanced systems have the following important property:
\begin{proposition}[\cite{kanovich17jcs}]
\label{prop:balanced-config}
Let $\Rscr$ be a set of balanced rules. Let $\Sscr_0$ be a configuration with exactly $m$ facts (counting multiplicities). Let ~$\Sscr_0 \lra \cdots \lra \Sscr_n$~ be an arbitrary trace of $\Rscr$ rules starting from $\Sscr_0$. Then for all ~$0 \leq i \leq n$, ~$\Sscr_i$ has exactly $m$ facts.
\end{proposition}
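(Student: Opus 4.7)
The plan is to prove the statement by straightforward induction on the trace index $i$, relying on the fact that every rule in $\Rscr$, including the $\Tick$ rule, is balanced in the sense of Definition~\ref{def:balanced}.

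For the base case $i = 0$, the configuration $\Sscr_0$ has exactly $m$ facts by hypothesis. For the inductive step, suppose $\Sscr_i$ has exactly $m$ facts (counting multiplicities), and consider the one-step rewrite $\Sscr_i \lra_{r_i} \Sscr_{i+1}$ for some $r_i \in \Rscr$. By the definition of the one-step relation given earlier, there is a ground substitution $\sigma$ matching $r_i$ (written as $W \mid \Cscr \lra W'$) to $\Sscr_i$ such that $W\sigma \subseteq \Sscr_i$ and $\Sscr_{i+1} = (\Sscr_i \setminus W\sigma) \cup W'\sigma$. Consequently, counting multiplicities, $|\Sscr_{i+1}| = |\Sscr_i| - |W\sigma| + |W'\sigma|$. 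Since ground substitution preserves cardinality of multisets of timestamped atomic formulas, $|W\sigma| = |W|$ and $|W'\sigma| = |W'|$, and since $r_i$ is balanced, $|W| = |W'|$. Therefore $|\Sscr_{i+1}| = |\Sscr_i| = m$, completing the induction.

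The only subtlety worth noting is that the $\Tick$ rule, although not written in the general instantaneous form (Eq.~\ref{eq:instantaneous}), is itself balanced, having exactly one fact on each side; thus it is covered by the same argument. The inductive step is entirely routine once one unpacks the definition of rule application, so the main obstacle is simply recording these bookkeeping observations cleanly; no deeper combinatorial argument is required.
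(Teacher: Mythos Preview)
Your proof is correct and is exactly the standard argument. Note that the paper does not actually supply its own proof of this proposition: it is stated with a citation to~\cite{kanovich17jcs} and used without further justification, so there is nothing in the paper to compare against beyond observing that your induction is the expected one.
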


\subsection{Progressing Timed Systems} 
\label{sec:pts}

In this section, we review a particular class of timed MSR systems, called \emph{progressing timed MSR systems} (PTSs)~\cite{kanovich16formats,kanovich2021Guttman}, in which only a bounded number of rules can be applied in a single time step. This is a natural condition, similar to the \emph{finite-variability assumption} used in the temporal logic and timed automata literature~\cite{hirshfeld2004logics}.

\begin{definition}[Progressing Timed System, \cite{kanovich16formats}]
\label{def:progressing} \ 
An instantaneous rule $r$ of the form in (Eq.~\ref{eq:instantaneous}) is \emph{progressing} if the following all hold:
\begin{inparaenum}[i)]
\item $n = m$ (i.e., $r$ is balanced);
\item $r$ consumes \emph{only} facts with timestamps in the past or at the current time, \ie, in (Eq.~\ref{eq:instantaneous}), the set of constraints ~$\Cscr$ of $r$ contains the set
~$\Cscr_r = \{~T \geq T_i \mid F_i@T_i, ~1 \leq i \leq n~\};$
\item $r$ creates \emph{at least one} fact with timestamp greater than the global time, \ie, in (Eq.~\ref{eq:instantaneous}), ~ $D_i \geq 1$~ for at least one ~$i \in \{1, \dots, n \}$.
\end{inparaenum}
A timed MSR system $\Ascr$ is a \emph{progressing timed MSR system (PTS)} if all instantaneous rules of $\Ascr$ are progressing.
\end{definition}

Note that the rule (Eq.~\ref{eq:ex-rule}) is progressing. A timestamped fact in a configuration $\Sscr$ is a \emph{future fact} if its timestamp is strictly greater than the timestamp of the $\Time@t$ fact in $\Sscr$. Future facts are ``not available'' in the sense that they cannot be consumed by a progressing rule before a sufficient number of $\Tick$ rule applications.

\begin{remark}
\label{remark: pts}
For readability, we assume the set of constraints for all rules $r$, contains the set $\Cscr_r$, as in Definition~\ref{def:progressing}, and do not always write $\Cscr_r$ explicitly.
\end{remark}

\subsection{Timed MSR for the specification of resilient systems}
\label{sec: msr-resilience}
We now review additional notation for the purpose of specifying resilience, as introduced in \cite{Alturki2022resilience}. The resilience framework divides the system from an external entity, such as the environment, regulatory authorities, or an adversary. We model various types of disruptive changes to the system state or goals.

\begin{definition}[Planning Configuration, \cite{Alturki2022resilience}]
\label{def:planning-config}
Let $\Sigma_P = \Sigma_G \uplus \Sigma_C \uplus \Sigma_S \uplus \{\Time\}$ consist of four pairwise disjoint sets of predicate symbols, $\Sigma_G$,  $\Sigma_C$,  $\Sigma_S$ and  $\{\Time\}$. Facts constructed using predicates from $\Sigma_G$ are called \emph{goal facts}, from $\Sigma_C$ \emph{critical facts}, and from $\Sigma_S$ \emph{system facts}. Facts constructed using predicates from $\Sigma_C \cup\Sigma_G$ are called \emph{planning facts}. Configurations over $\Sigma_P$ predicates are called \emph{planning configurations}.
\end{definition}

For readability, we underline predicates in planning facts and refer to planning configurations as configurations for short. The behavior of the system is represented by traces of MSR rules. A system should achieve its goal while not violating predetermined critical conditions. This is made precise in the following two definitions.

\begin{definition}[Critical/Goal Configurations, \cite{Alturki2022resilience}]
\label{def:critical-goal-config}
A \emph{critical (resp. goal) configuration specification} $\CS$ (resp. $\GS$) is a set of pairs 
$\{\tup{\Sscr_1, \Cscr_1}, \ldots, \tup{\Sscr_n, \Cscr_n}\}$,  with each pair $\tup{\Sscr_j,\Cscr_j}$ being of the form
$\tup{\{F_1@T_1, \ldots, F_{p_j}@T_{p_j}\}, \Cscr_j}$, 
where $T_1, \ldots, T_{p_j}$ are time variables, $W = \{ F_1, \ldots, F_{p_j} \}$ is a multiset of timestamped atomic formulas, with at least one occurrence of a critical (resp. goal) predicate symbol, and $\Cscr_j$ is a set of time constraints involving only variables $T_1, \ldots, T_{p_j}$. A configuration $\Sscr$ is a {\em critical configuration} w.r.t. $\CS$ (resp. a {\em goal configuration} w.r.t. $\GS$)
if for some $1 \leq i \leq n$, there is a grounding substitution $\sigma$ with $\dom(\sigma) = \var(W)$ such that $\Sscr_i \sigma \subseteq \Sscr$ and $\Cscr_i \sigma$ is satisfied.
\end{definition}

\begin{definition}[Compliant Traces, \cite{Alturki2022resilience}]
\label{def:compliant}
A trace is \emph{compliant} with respect to a critical configuration specification $\CS$ if it does not contain any critical configuration w.r.t. $\CS$.
\end{definition}

Note that critical configuration specifications and goal configuration specifications, like rewrite rules, are \emph{symbolic}. Reaching a critical configuration may be interpreted as a \emph{safety violation}, while a compliant trace may be interpreted as a \emph{safe trace}. As an example, suppose that in the example alphabet introduced earlier, the predicate symbol $\Attended$ is in $\Sigma_C$, while the predicate symbol $\Event$ is in $\Sigma_G$. Then the goal configuration specification
$$\{\tup{\{ \Attended(\main, \done)@T_1, \Event(\main, x)@T_2 \}, \emptyset }\}$$
indicates that the main event should be attended, while the critical configuration specification 
\begin{align*}
\{\langle \, 
 \Time@T, \Attended(\main, \no)@T_1, \Event(\main, x)@T_2 \}, \{ T > T_2 \} \rangle\}
\end{align*}
denotes that it is critical not to participate in the main event. 

\begin{definition}[System Rules and Update Rules, \cite{Alturki2022resilience}]
Fix a planning alphabet $\Sigma_P$. A \emph{system rule} is either the $\Tick$ rule (Eq.~\ref{eq:tick}) or a rule of form in (Eq.~\ref{eq:instantaneous}) which does not consume or create planning facts. An \emph{update rule} is an instantaneous rule that is of one of the following types: (a) a \emph{system update rule (\SUR)} such that planning facts can only occur in the side condition of the rule; or (b) a \emph{goal update rule (\GUR)} that either consumes or creates at least one goal fact and such that critical facts can only occur in the side condition of the rule.
\end{definition}
For example, the following system rule specifies that the traveler needs 40 
minutes to get from the departing city center to the airport:
\begin{align*}
\Time@T, ~&\At(x, \Center)@T_1 \, \mid \, T_1 \leq T  \\
\lra ~&\Time@T, \At(x, \airport)@(T+40).
\end{align*}

The rule (Eq.~\ref{eq:ex-rule}) is another example of a system rule. System rules specify the behavior of the system, while disruptions are modeled via update rules. Intuitively, \GUR 
model external interventions in the system, such as mission changes, additional tasks, etc., while \SUR model changes in the system that are not due to the intentions of the system's agents,  \eg, technical errors or malfunctions such as flight delays. Both goal and system update rules can create and/or consume system facts, which technically simplifies modeling the impact of changes on the system and its response. For example, the following \GUR models a change in the scheduled time of the main event.
\begin{align*}
\Time@T, ~&\Event(\main,x)@T_1, 
\lra \Time@T, \Event(\main,x)@(T+60),
\end{align*}
while the following \SUR models a 30-minute flight delay:
\begin{align*}
    \Time@T, \Flight_D(a,x,y)@T_1  \lra \Time@T,   \Flight_D(a,x,y)@(T+30). 
\end{align*}

\begin{definition}[Planning Scenario, \cite{Alturki2022resilience}]
If $\Rscr$ and $\Escr$ are sets of system and update rules, $\GS$ and $\CS$ are a goal and critical configuration specifications, and $\Sscr_0$ is an initial configuration, then the tuple $(\Rscr,\GS,\CS,\Escr,\Sscr_0)$ is a \emph{planning scenario}.
\end{definition}

\begin{definition}[Progressing Planning Scenario (PPS)]
\label{def:progressing scenario}
We say that a planning scenario $(\Rscr,\GS,\CS,\Escr,\Sscr_0)$ is \emph{progressing} if all rules in $\Rscr$ and $\Escr$ are progressing.
\end{definition}

The progressing condition in Definition \ref{def:progressing scenario} implies a bound on the number of rules that can be applied in a single unit of time (cf. Proposition \ref{prop:bounded-length}). We also assume an upper-bound on the size of facts allowed to occur in traces, where the size of a timestamped fact $F@t$ is the number of symbols from $\Sigma$ occurring in $F$, counting repetitions. Without this bound (among other restrictions), many interesting decision problems are undecidable~\cite{durgin04jcs,kanovich17jcs}. We also confine attention to classes of \emph{$\varmax$-simple} PPSs.

\begin{definition}
\label{def:bounded-PTS}
Let $\varmax$ denote a fixed positive integer. We say that a planning scenario $A=(\Rscr,\GS,\CS,\Escr,\Sscr_0)$ is \emph{$\varmax$-simple} if the total number of variables (including both first-order and time variables) appearing in each pair $\langle \Sscr_i, \Cscr_i \rangle$ in $\CS$ is less than $\varmax$.
\end{definition}

\noindent For every planning scenario $A=(\Rscr,\GS,\CS,\Escr,\Sscr_0)$, there exists some least $\varmax$ such that $A$ is $\varmax$-simple; intuitively, this $\varmax$ is a measure of the complexity of verifying compliance of traces with respect to $\CS$. Proposition \ref{prop:BPTS-poly} in Section \ref{sec:complexity-b-time} makes this intuition precise.

\begin{remark}\label{remark: example-pts}
By inspecting the rules and the critical configuration specification, it is easy to check that our expository travel example is 3-simple and progressing.
\end{remark}

\section{Time-bounded Resilience Verification Problems}
\label{sec:bounded}
In this section, we formalize time-bounded resilience as a property of planning scenarios. Intuitively, we want to capture the notion of a system which can achieve its goal within a fixed amount of time, despite the application of up to $n$ instances of update rules. An initial idea might be to require that the system can achieve its goal in the allotted time, regardless of when updates are applied. However, this is too restrictive: many systems will fail to achieve their goal in the face of adversarial actions which can be applied arbitrarily often. Instead, the system will have $a+b$ time units to achieve its goal, and update rules can only be applied in the first $a$ time steps; the last $b$ time steps are the \emph{recovery time} afforded to the system.

\begin{definition}[The $(n,a,b)$-resilience problem]
\label{d-trace-ab-recursive}
Let $a \in \mathbb{Z}^+$ and $b \in \mathbb{N}$. We define $(n,a,b)$-resilience by recursion on $n$. Inputs to the problem are planning scenarios $A=(\Rscr,\GS,\CS,\Escr,\Sscr_0)$. A trace is \emph{$(0,a,b)$-resilient with respect to $A$} if it is a compliant trace of $\Rscr$ rules from $\Sscr_0$ to a goal configuration and contains at most $a+b$ applications of the $\Tick$ rule. Let $t_0$ denote the global time in the configuration $\Sscr_0$. For $n > 0$, a trace $\tau$ is \emph{$(n,a,b)$-resilient with respect to $A$} if
\begin{enumerate}
\item $\tau$ is $(0,a,b)$-resilient with respect to $A$, and
\item for any system or goal update rule $r \in \Escr$ such that $\Sscr_i \lra_{r} \Sscr'_{i+1}$ for some configuration $\Sscr_i$ in $\tau$ with global time $t_i$, where $d_i = t_i - t_0 \leq a$, there is a \emph{reaction trace} $\tau'$ of $\Rscr$ rules from $\Sscr'_{i+1}$ to a goal configuration $\Sscr'$ such that $\tau'$ is $(n-1,a-d_i,b)$-resilient with respect to $A' = (\Rscr,\GS,\CS,\Escr,\Sscr'_{i+1})$.
\end{enumerate}
A planning scenario $A = (\Rscr,\GS,\CS,\Escr,\Sscr_0)$ is \emph{$(n,a,b)$-resilient} if an $(n,a,b)$-resilient trace with respect to $A$ exists. The $(n,a,b)$-resilience problem is to determine if a given planning scenario $A$ is $(n,a,b)$-resilient.
\end{definition}

\noindent Figure \ref{fig:resilience} provides a visual depiction of Definition \ref{d-trace-ab-recursive}.

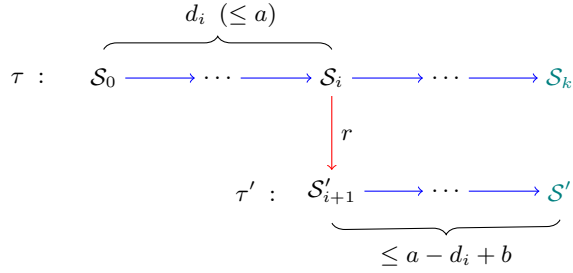
\begin{figure}[H]
\centering
\begin{tikzpicture}
  \node (tau) at (-1,0) {$\tau~:$};
  \node (S0) at (0,0) {$\Sscr_0$};
  \node (dots1) at (1.5,0) {$\hdots$};
  \node (Si) at (3,0) {$\Sscr_i$};
  \node (dots2) at (4.5,0) {$\hdots$};
  \node (Sk) at (6,0) {$\textcolor{teal}{\Sscr_k}$};

  \draw[->,blue] (S0) -- (dots1);
  \draw[->,blue] (dots1) -- (Si);
  \draw[->,blue] (Si) -- (dots2);
  \draw[->,blue] (dots2) -- (Sk);

  \node (tau') at (2,-1.5) {$\tau'~:$};
  \node (Si1) at (3,-1.5) {$\Sscr'_{i+1}$};
  \node (dots3) at (4.5,-1.5) {$\hdots$};
  \node (Sp) at (6,-1.5) {$\textcolor{teal}{\Sscr'}$};
  \node (r) at (3.2,-0.75) {$r$};

  \draw[->,red] (Si) -- (Si1);
  \draw[->,blue] (Si1) -- (dots3);
  \draw[->,blue] (dots3) -- (Sp);

  \draw [decorate,decoration={brace,amplitude=5pt,mirror,raise=4pt},yshift=0pt]
    (Si.north) -- (S0.north) node [black,midway,xshift=5pt,yshift=17pt] {$d_i ~~(\leq a)$};

  \draw [decorate,decoration={brace,amplitude=5pt,mirror,raise=4pt},yshift=0pt]
    (Si1.south) -- (Sp.south) node [black,midway,xshift=0pt,yshift=-17pt] {$\leq a - d_i + b$};
\end{tikzpicture}
\caption{An $(n,a,b)$-resilient trace $\tau$ and an $(n-1,a-d_i,b)$-resilient reaction trace $\tau'$. The horizontal arrows correspond to system rule applications, while the downward-facing arrow represents an update rule application. The configurations $\Sscr_k$ and $\Sscr'$ on the far right are goal configurations.}
\label{fig:resilience}
\end{figure}

The reaction trace $\tau'$ in Definition \ref{d-trace-ab-recursive} can be interpreted as a change in the plan $\tau$, made in response to an external disruption (i.e., the system/goal update rule $r$) imposed on the system. Note that it is this ``replanning'' aspect of our definition that intuitively distinguishes it from the related notion of robustness.

\begin{remark}
\label{remark:parameter-observations}
In Definition \ref{d-trace-ab-recursive}, the global time $t'$ in $\Sscr'$ satisfies $t' - t_0 \leq a + b$; \ie, despite the application of $n$ instances of update rules, an $(n,a,b)$-resilient trace reaches a goal within $a+b$ time units. Furthermore, observe that a trace is $(n,a,b)$-resilient with respect to a planning scenario $A$ if and only if it is $(n,a,b')$-resilient with respect to $A$ for all $b' \geq b$. Similarly, all $(n,a,b)$-resilient traces with respect to $A$ are $(n',a,b)$-resilient with respect to $A$ for all $n' \leq n$.
\end{remark}

It is worthwhile to note that Definition \ref{d-trace-ab-recursive} can be seen as a modification of \cite[Definitions 9-10]{Alturki2022resilience}, in which $(i)$ we include the parameters $a$ and $b$, $(ii)$ we consider both system and goal update rules simultaneously, and $(iii)$ the \emph{recoverability condition}, which is not mentioned in this work, is the total relation on configurations of $A$.

\section{Computational Complexity of Time-Bounded Resilience}
\label{sec:complexity-b-time}
In this section, we state and prove our results on the computational complexity of the time-bounded resilience problem defined in Section \ref{sec:bounded}. To see this, we first state a known bound on the number of instances of instantaneous rules appearing between two consecutive instances of $Tick$ rules in a trace of only progressing rules.

\begin{proposition}[\cite{kanovich16formats}]
\label{prop:bounded-length}
Let $\Rscr$ be a set of progressing rules, $\Sscr_0$ an initial configuration, and $m$ the number of facts in $\Sscr_0$. For all traces $\tau$ of $\Rscr$ rules starting from $ \Sscr_0$, let
$$\Sscr_i \lra_{Tick}  \Sscr_{i+1} \lra \cdots \lra \Sscr_j \lra_{Tick}  \Sscr_{j+1}$$
be any subtrace of ~$\tau$ with exactly two instances of the $Tick$ rule, one at the beginning and the other at the end. Then ~$j - i \leq m$.
\end{proposition}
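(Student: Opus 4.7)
The plan is to exhibit a monovariant that strictly decreases with every instantaneous rule application in the intra-$Tick$ subtrace. For a configuration $\Sscr$ with global time $t$, define
\[
N(\Sscr) = |\{F@t' \in \Sscr : F \neq \Time \text{ and } t' \leq t\}|,
\]
i.e., the number of non-$\Time$ facts whose timestamps are at or before the current time. Since $\Rscr$ consists entirely of balanced rules (condition~(i) of Definition~\ref{def:progressing}), Proposition~\ref{prop:balanced-config} guarantees every configuration in the trace contains exactly $m$ facts, exactly one of which is the $\Time$ fact; hence $N(\Sscr) \leq m - 1$ uniformly, and in particular $N(\Sscr_{i+1}) \leq m - 1$.

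Next, I would show that each application of a progressing instantaneous rule $r$ of the form (Eq.~\ref{eq:instantaneous}) at global time $t$ drops $N$ by at least $1$. By condition~(ii), all $n$ consumed facts $F_i@T_i$ satisfy $T_i \leq t$, so each contributes to $N(\Sscr)$. By condition~(iii), at least one created fact $Q_j@(t+D_j)$ has $D_j \geq 1$, so its timestamp strictly exceeds $t$ and it does not contribute to $N(\Sscr')$. Since $r$ is balanced, exactly $n$ facts are created, at most $n - 1$ of which can be non-future. Side-condition facts $\Wscr$ and any other preserved facts contribute identically to $N(\Sscr)$ and $N(\Sscr')$, so
\[
N(\Sscr') \leq N(\Sscr) - n + (n-1) = N(\Sscr) - 1.
\]
Instantaneous rules also leave the $\Time$ fact untouched, so the global time is constant throughout $\Sscr_{i+1} \lra \cdots \lra \Sscr_j$, making $N$ a well-defined monovariant on this subtrace.

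Combining these observations, the subtrace $\Sscr_{i+1} \lra \cdots \lra \Sscr_j$ can contain at most $m - 1$ instantaneous steps, so $j - (i+1) \leq m - 1$, which yields $j - i \leq m$. I do not anticipate a serious obstacle: the only point requiring care is that condition~(iii) forces \emph{strictly} fewer than $n$ of the created facts to be non-future, which, combined with condition~(ii) forcing all $n$ consumed facts to be non-future, is what makes $N$ a strict monovariant rather than merely non-increasing.
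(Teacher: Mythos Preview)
The paper does not actually prove this proposition; it is imported from~\cite{kanovich16formats} and stated without proof here. Your argument is correct and is the standard one: use the count of non-future, non-$\Time$ facts as a strictly decreasing monovariant along any run of instantaneous progressing rules, bounded above by $m-1$. One minor notational point worth tightening: since configurations are multisets, your $N(\Sscr)$ must be read as a multiset cardinality (counting repeated timestamped facts with multiplicity); otherwise neither the bound $N(\Sscr)\le m-1$ nor the decrement computation $N(\Sscr')\le N(\Sscr)-n+(n-1)$ is justified. With that reading the proof goes through as written.
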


\noindent Proposition \ref{prop:bounded-length} guarantees that the lengths of $(n,a,b)$-resilient traces of a PPS $A$ are polynomially-bounded in the size of the input representation of $A$.

\begin{proposition}
\label{prop:small-traces}
Let $A=(\Rscr,\GS,\CS,\Escr,\Sscr_0)$ be a PPS and $m$ be the number of facts in $\Sscr_0$. Then the length of any $(n,a,b)$-resilient trace of $A$ is bounded by ~$(a+b+1)m$.
\end{proposition}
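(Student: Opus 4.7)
The plan is to bound the number of $\Tick$ rule applications and the number of instantaneous rule applications separately, then sum them. The key ingredients are: every rule in a PPS is progressing, hence balanced, so by Proposition~\ref{prop:balanced-config} every configuration in any trace of $A$ has exactly $m$ facts; and by Definition~\ref{d-trace-ab-recursive} an $(n,a,b)$-resilient trace is in particular $(0,a,b)$-resilient, hence contains at most $a+b$ applications of $\Tick$.

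My strategy is to partition the trace $\tau$ into \emph{epochs}, where each epoch is a maximal contiguous segment of $\tau$ during which the global time is constant. If there are $k \leq a+b$ applications of $\Tick$ in $\tau$, then $\tau$ decomposes into at most $k+1$ epochs, each consisting entirely of instantaneous progressing rule applications. The core claim is that each epoch contains at most $m-1$ instantaneous rules. This is essentially the content of Proposition~\ref{prop:bounded-length}: within an epoch with fixed global time $T$, every instantaneous progressing rule consumes $n$ non-$\Time$ facts whose timestamps lie in $\{t \leq T\}$, and by balancedness produces $n$ facts of which at least one has timestamp $>T$; hence at most $n-1$ of the created facts are again "available" (timestamp $\leq T$, non-$\Time$), so each application strictly decreases the number of available non-$\Time$ facts by at least one. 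Since $\Sscr_0$ contains $m$ facts and the $\Time$ fact is always present, each epoch begins with at most $m-1$ available non-$\Time$ facts and therefore admits at most $m-1$ instantaneous rule applications before a further $\Tick$ becomes mandatory.

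Combining these bounds, the total length of $\tau$ is at most
\begin{equation*}
k + (k+1)(m-1) \;=\; (k+1)m - 1 \;\leq\; (a+b+1)m - 1 \;<\; (a+b+1)m,
\end{equation*}
which is the desired inequality.

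The only subtle point, which I expect to be the main (and relatively minor) obstacle, is that Proposition~\ref{prop:bounded-length} as stated governs subtraces delimited by \emph{two} $\Tick$ applications, whereas the first and last epochs of $\tau$ are bounded only on one side by a $\Tick$. I would handle this by invoking exactly the same counting argument directly: in the first epoch, the configuration is $\Sscr_0$ with at most $m-1$ available non-$\Time$ facts; in the last epoch, the argument applies equally since it only depends on the monotone decrease of the available-fact count under progressing rules and on the balancedness guaranteed by Proposition~\ref{prop:balanced-config}. This observation makes the partition argument apply uniformly to all $k+1$ epochs and completes the bound.
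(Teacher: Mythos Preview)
Your proposal is correct and matches the paper's (implicit) approach: the paper states Proposition~\ref{prop:small-traces} without proof, merely noting that it follows from Proposition~\ref{prop:bounded-length}, and your epoch decomposition together with the available-fact counting argument is precisely how one spells out that implication. Your handling of the boundary epochs via the same monotone-decrease argument is the right fix for the fact that Proposition~\ref{prop:bounded-length} is phrased only for interior segments.
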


\ifthenelse{\boolean{longversion}}{
In preparation for our $(n,a,b)$-resilience upper bound result (Theorem \ref{thm:upper-bound}), we now turn to the complexity of some fundamental decision problems pertaining to planning scenarios: the \emph{goal/critical recognition problems}, the \emph{trace compliance problem}, and the \emph{rule applicability problem}. The complexity of these problems in the case of arbitrary planning scenarios stems from the complexity of finding appropriate ground substitutions, which are closely related to homomorphisms between finite relational structures.

\begin{definition}
\label{def:goal-rec-prob}
The \emph{goal (resp. critical) recognition problem} is to determine, given a planning scenario $A=(\Rscr,\GS,\CS,\Escr,\Sscr_0)$ and a configuration $\Sscr$, whether or not $\Sscr$ is a goal (resp. critical) configuration w.r.t. $\GS$ (resp. $\CS$); cf. Definition \ref{def:critical-goal-config}.
\end{definition}

\begin{proposition}
\label{prop:goal-rec-npc}
The goal recognition problem is $\NP$-complete.
\end{proposition}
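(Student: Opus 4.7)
The plan is to establish both containment in $\NP$ and $\NP$-hardness directly from the definition of goal configurations (Definition \ref{def:critical-goal-config}).

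For membership in $\NP$, I would observe that a valid witness consists of an index $i \leq |\GS|$ and a ground substitution $\sigma$ with $\dom(\sigma) = \var(\Sscr_i)$. Since $\sigma$ must send each first-order variable to a ground term in $\values(\Sscr)$ and each time variable to a timestamp in $\values(\Sscr)$, the substitution has size polynomial in the input. Verification consists of checking that $\Sscr_i \sigma \subseteq \Sscr$ (multiset inclusion, done by iterating through $\Sscr_i \sigma$ and matching facts in $\Sscr$) and that each time constraint in $\Cscr_i \sigma$ evaluates to true (a finite comparison of natural numbers), both of which are polynomial-time checks.

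For $\NP$-hardness, I would reduce from a classical $\NP$-complete problem such as \textsc{Clique} (equivalently, one could use conjunctive query evaluation or graph homomorphism, which are the standard targets for this kind of pattern-matching problem). Given an instance $(G,k)$ with $G=(V,E)$, build a planning scenario whose initial configuration encodes the edges of $G$: for each $\{u,v\} \in E$, include both $\Edge(u,v)@0$ and $\Edge(v,u)@0$ as facts, together with a $\Time@0$ fact. The goal specification $\GS$ is a singleton $\{\tup{\Sscr_1, \emptyset}\}$, where
\begin{equation*}
\Sscr_1 = \{\, \Edge(x_i, x_j)@T \mid 1 \leq i < j \leq k \,\} \cup \{\, \Goal(x_1,\ldots,x_k)@T\,\},
\end{equation*}
with $\Goal$ a fresh goal predicate whose single fact $\Goal(v_1,\ldots,v_k)@0$ (for some fixed listing of distinct vertices, if one wants a ``yes''-producing configuration to already contain every such listing) is added to $\Sscr$; alternatively, drop $\Goal$ and instead promote $\Edge$ to a goal predicate so that Definition \ref{def:critical-goal-config}'s requirement of at least one occurrence of a goal predicate is met. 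Then a matching substitution $\sigma$ exists iff the images $\sigma(x_1),\ldots,\sigma(x_k)$ are pairwise adjacent and (by virtue of $\Sscr_1\sigma$ being a sub-multiset, forcing the $\binom{k}{2}$ edge atoms to be distinct) pairwise distinct, i.e.\ iff $G$ has a $k$-clique. The reduction is clearly polynomial.

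The main obstacle is a small modeling nuisance rather than a conceptual one: ensuring that the pattern $\Sscr_1$ (i) mentions at least one goal predicate as required by Definition \ref{def:critical-goal-config}, and (ii) forces the chosen vertices $\sigma(x_1),\ldots,\sigma(x_k)$ to be pairwise distinct despite the edges in $\Sscr$ being stored symmetrically. Distinctness is obtained automatically from multiset inclusion once we list all $\binom{k}{2}$ directed edge atoms $\Edge(x_i,x_j)@T$ with $i<j$, since a collapse $\sigma(x_i)=\sigma(x_j)$ would produce a loop edge $\Edge(v,v)@0$, which is not present in the encoding of $G$. The goal-predicate condition is handled by either introducing the auxiliary $\Goal$ fact as above or by declaring $\Edge \in \Sigma_G$. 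With these care points addressed, the reduction is a faithful encoding, completing the $\NP$-hardness argument.
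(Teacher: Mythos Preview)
Your proposal is correct and close in spirit to the paper's argument. The paper also proves membership via a guessed substitution and polynomial checking, and proves hardness by a pattern-matching reduction; the only real difference is the source problem. The paper reduces from the graph homomorphism problem: given graphs $G$ and $K$, encode the edges of $G$ as the pattern $\Sscr_1$ (with one variable per vertex of $G$) and the edges of $K$ as the configuration $\Sscr$, so that a matching substitution is exactly a homomorphism $G \to K$. Your \textsc{Clique} reduction is literally the special case $G = K_k$ of this construction, so the two arguments are really the same idea instantiated differently. Neither approach needs a separate distinctness gadget (in the paper's version, homomorphisms need not be injective), and both satisfy the ``at least one goal predicate'' requirement simply by declaring the edge relation to live in $\Sigma_G$.

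One small point worth cleaning up: your first option for meeting the goal-predicate requirement---adding a single $\Goal(v_1,\ldots,v_k)@0$ fact for ``some fixed listing of distinct vertices''---does not work, since it would force $\sigma(x_i)=v_i$ and reduce the question to whether those particular $k$ vertices form a clique; adding \emph{all} such listings would be exponential. Your second option (putting $\Edge$ in $\Sigma_G$) is the right one and matches what the paper implicitly does with its binary relation $R$.
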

\begin{proof}
For membership in $\NP$, suppose we are given an input consisting of a planning scenario $A=(\Rscr,\GS,\CS,\Escr,\Sscr_0)$ and a configuration $\Sscr$. As witnesses, we use ground substitutions $\sigma$ from the collection of variables occurring in $\Sscr_j$ for some pair $\langle \Sscr_j, \Cscr_j \rangle$ in $\GS$ to the collection of ground terms occurring in $\Sscr$. Note that each witness of this form is clearly polynomially-bounded with respect to the size of the input representation of $A$. The verification algorithm is to iterate through each pair $\langle \Sscr_i, \Cscr_i \rangle$ in $\GS$, applying the substitution $\sigma$ to each and checking if $\Sscr_i \sigma \subseteq \Sscr$ and $\Cscr_i \sigma$ is satisfied in $\Sscr$. If, for any pair, these conditions are met, then we accept. Otherwise, we reject. This algorithm clearly runs in polynomial time, and so the goal recognition problem is in $\NP$.

Furthermore, the goal recognition problem is $\NP$-hard, by a reduction from the $\NP$-complete graph homomorphism problem \cite{garey1979computers}. To see this, suppose we are given graphs $G = (V,E)$ and $K = (V',E')$. Fix a bijection from $G$ to a set of first-order variables, and a bijection from $K$ to a set of ground terms; let $f$ denote the disjoint union of these two mappings. Consider the signature of one binary relation symbol $R$, and set
\begin{align*}
\Sscr_0 &= \{ R(x,y)@T_{xy} \mid x = f(u) ~\text{and}~ y = f(v) ~\text{for some}~ (u,v) \in E \}, ~\text{and} \\
\Sscr &= \{ \Time@0 \} \cup \{ R(a,b)@0 \mid a = f(u) ~\text{and}~ b = f(v) ~\text{for some}~ (u,v) \in E' \}.
\end{align*}
Consider the planning scenario $A=(\Rscr,\GS,\CS,\Escr,\Sscr_0)$, where $\Rscr$ contains only the $\Tick$ rule, $\CS = \Escr = \emptyset$, and $\GS = \{ \langle \Sscr_0, \emptyset \rangle \}$. Clearly $A$ can be computed in polynomial time with respect to the input representation of the pair $(G,K)$, and so it suffices to show that a homomorphism from $G$ to $K$ exists if and only if $\Sscr$ is a goal configuration.

For the forward direction, suppose a homomorphism $h: G \to K$ exists. Let $\sigma$ be the ground substitution sending each variable $x$ in $\Sscr_0$ to $f(h(u))$, where $u$ is the unique element of $G$ such that $x = f(u)$. Now suppose that a fact $R(x',y')$ is in $\Sscr_0 \sigma$. Then there exists a fact $R(x,y)$ in $\Sscr_0$ such that $x' = \sigma(x)$ and $y' = \sigma(y)$. Hence there exists a pair $(u,v) \in E$ such that $x = f(u)$ and $y = f(v)$. Then because $h$ is a homomorphism, we have that $(h(u),h(v)) \in E'$. Hence $R(f(h(u)),f(h(v))) = R(\sigma(x), \sigma(y)) = R(x',y')$ is in $\Sscr$. Therefore, we conclude that $\Sscr_0 \sigma \subseteq \Sscr$. Since the empty set of constraints is satisfied in $\Sscr$ as well, we conclude that $\Sscr$ is a goal configuration.

For the reverse direction, suppose that $\Sscr$ is a goal configuration. Then there exists a ground substitution $\sigma$ such that $\Sscr_0 \sigma \subseteq \Sscr$. Let $h: G \to K$ be the map sending $u$ to $\sigma(f(u))$. Then if $(u,v) \in E$, we have that $R(f(u),f(v)) \in \Sscr_0$. Then we have that $R(\sigma(f(u)),\sigma(f(v))) \in \Sscr_0 \sigma$, and since $\Sscr_0 \sigma \subseteq \Sscr$, it must be the case that $R(\sigma(f(u)),\sigma(f(v))) \in \Sscr$. Hence $(\sigma(f(u)),\sigma(f(v))) = (h(u), h(v)) \in E'$, and so we conclude that $h$ is a homomorphism. \qed
\end{proof}

\noindent A nearly identical proof also yields the following proposition.

\begin{proposition}
\label{prop:crit-rec-npc}
The critical recognition problem is $\NP$-complete. Consequently, it is $\coNP$-complete to determine if a configuration is \emph{not} critical.
\end{proposition}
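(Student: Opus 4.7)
The plan is to mirror the proof of Proposition~\ref{prop:goal-rec-npc} with only cosmetic adjustments, since the definitions of goal and critical configuration specifications (Definition~\ref{def:critical-goal-config}) are structurally identical, differing only in whether the distinguished predicate symbol lies in $\Sigma_G$ or $\Sigma_C$. The consequence for the complementary problem then follows immediately from the definition of $\coNP$.

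For membership in $\NP$, I would use the same guess-and-check strategy as in Proposition~\ref{prop:goal-rec-npc}. Given a planning scenario $A=(\Rscr,\GS,\CS,\Escr,\Sscr_0)$ and a configuration $\Sscr$, the witness is a pair $\langle \Sscr_j, \Cscr_j \rangle \in \CS$ together with a ground substitution $\sigma$ from the variables occurring in $\Sscr_j$ to ground terms occurring in $\Sscr$. Since both $\Sscr_j$ and $\Sscr$ are part of the input, the size of $\sigma$ is polynomial in the input size. The verifier applies $\sigma$, checks the inclusion $\Sscr_j\sigma \subseteq \Sscr$, and checks that each time constraint in $\Cscr_j\sigma$ holds; all of these steps are polynomial.

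For $\NP$-hardness, I would replay the reduction from graph homomorphism used for Proposition~\ref{prop:goal-rec-npc}, taking care that the single binary relation symbol $R$ is declared to lie in $\Sigma_C$ (so that $\Sscr_0$ contains the requisite critical predicate symbol that Definition~\ref{def:critical-goal-config} demands). One then sets $\CS = \{\langle \Sscr_0, \emptyset \rangle\}$ and takes $\GS = \Escr = \emptyset$ and $\Rscr$ to contain only $\Tick$. The forward and reverse directions establishing that a homomorphism $h : G \to K$ exists if and only if $\Sscr$ is a critical configuration then go through verbatim, using $\sigma(f(u)) = f(h(u))$ in one direction and $h(u) = \sigma(f(u))$ in the other. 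No step looks subtle; the only thing to double-check is the typing condition on predicates, so I would explicitly note that assigning $R$ to $\Sigma_C$ rather than $\Sigma_G$ is the only substantive change from the previous proof.

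The second sentence is then immediate: the problem of determining that a configuration is \emph{not} critical is, by definition, the complement of the critical recognition problem, so its $\coNP$-completeness follows from $\NP$-completeness of the latter together with closure of $\coNP$-completeness under taking complements. I do not expect a genuine obstacle here; the main task is simply to point out that the proof of Proposition~\ref{prop:goal-rec-npc} carries over with the appropriate relabeling of predicate symbols, and then invoke the standard duality between $\NP$ and $\coNP$.
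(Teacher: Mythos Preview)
Your proposal is correct and matches the paper's approach exactly: the paper simply states that ``a nearly identical proof'' to that of Proposition~\ref{prop:goal-rec-npc} yields the result, and your outline of the required cosmetic change (placing $R$ in $\Sigma_C$ rather than $\Sigma_G$) together with the standard $\NP$/$\coNP$ duality is precisely what is intended.
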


\noindent More broadly, we are interested in checking trace compliance.

\begin{definition}
\label{def:trace-compliance-prob}
The \emph{trace compliance problem} is to determine, given a planning scenario $A = (\Rscr,\GS,\CS,\Escr,\Sscr_0)$ and a trace $\tau$ of $\Rscr$-rules starting from $\Sscr_0$, whether or not $\tau$ is compliant w.r.t. $\CS$ (cf. Definition \ref{def:compliant}).
\end{definition}

Note that in Definition \ref{def:trace-compliance-prob}, the trace $\tau$ in the input could contain a single configuration, and so the trace compliance problem is $\coNP$-hard by a trivial reduction from the complement of the critical recognition problem. Furthermore, a trace $\tau$ is compliant if and only if no ground substitution witnesses that some configuration in $\tau$ is critical. This leads easily to the following proposition.

\begin{proposition}
\label{def:trace-compliance-conpc}
The trace compliance problem is $\coNP$-complete.
\end{proposition}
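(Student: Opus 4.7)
My plan is to establish $\coNP$-hardness and membership in $\coNP$ separately, leveraging Propositions \ref{prop:goal-rec-npc} and \ref{prop:crit-rec-npc} together with the observation already made in the paper that a single-configuration trace reduces trace compliance to (the complement of) critical recognition.

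For $\coNP$-hardness, I would spell out the trivial reduction from the complement of the critical recognition problem: given an instance $(A, \Sscr)$ of critical recognition, produce the trace $\tau$ consisting of just the single configuration $\Sscr$ (paired with the same $A$). Then $\tau$ is compliant with respect to $\CS$ if and only if $\Sscr$ is not a critical configuration with respect to $\CS$. Since critical recognition is $\NP$-complete (Proposition \ref{prop:crit-rec-npc}), determining non-criticality is $\coNP$-complete, and hence trace compliance is $\coNP$-hard.

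For membership in $\coNP$, I would exhibit a nondeterministic polynomial-time algorithm for the complement of the trace compliance problem, namely the problem of deciding whether some configuration in $\tau$ is critical. A witness consists of an index $i$ pointing to a configuration $\Sscr_i$ in $\tau$, an index $j$ pointing to a pair $\langle \Sscr_j, \Cscr_j \rangle$ in $\CS$, and a ground substitution $\sigma$ from the variables occurring in $\Sscr_j$ to ground terms occurring in $\Sscr_i$. As in the proof of Proposition \ref{prop:goal-rec-npc}, the size of such a witness is polynomially bounded in the size of the input $(A, \tau)$, and verification consists of checking that $\Sscr_j \sigma \subseteq \Sscr_i$ and that $\Cscr_j \sigma$ is satisfied, both of which can be done in polynomial time. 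Hence the complement of trace compliance is in $\NP$, and trace compliance is in $\coNP$.

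The result then follows by combining the two bounds. There is no real obstacle here: the proof is essentially a packaging of Proposition \ref{prop:crit-rec-npc} together with the observation that trace compliance universally quantifies over the (polynomially many) configurations appearing in $\tau$, so checking non-compliance only requires guessing one additional polynomially-bounded piece of information (the index of the offending configuration) on top of the $\NP$ witness for critical recognition. The only minor subtlety to keep in mind is that, because traces are annotated with the rule instances used at each step (cf.\ Remark \ref{remark:poly-verification}), listing the configurations of $\tau$ and indexing into them can be done directly from the input representation in polynomial time.
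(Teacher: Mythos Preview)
Your proposal is correct and follows essentially the same approach as the paper: the paper likewise derives $\coNP$-hardness from the single-configuration reduction to the complement of critical recognition (Proposition~\ref{prop:crit-rec-npc}) and observes that non-compliance is witnessed by a configuration index together with a ground substitution, placing the complement in $\NP$. You have simply spelled out in detail what the paper leaves as an ``easy'' consequence of the preceding paragraph.
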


For the remainder of this section, we fix an arbitrary constant $\eta$. If $A$ is an \emph{$\varmax$-simple} PPS (cf. Definition \ref{def:bounded-PTS}), then given a configuration $\Sscr$ with $m$ facts, there are at most $m^\varmax$ possible ground substitutions witnessing that $\Sscr$ is critical (or is a goal). It follows that, for $\varmax$-simple planning scenarios, even a brute force search through ground substitutions can be used to solve the trace compliance problem in polynomial time, yielding the following proposition.

\begin{proposition}
\label{prop:BPTS-poly}
For $\eta$-simple planning scenarios, the trace compliance problem is in $\P$.
\end{proposition}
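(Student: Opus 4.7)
The plan is to describe a deterministic polynomial-time algorithm that decides trace compliance by brute-force enumeration of candidate ground substitutions, exploiting the fact that each pair $\langle \Sscr_i, \Cscr_i \rangle$ in $\CS$ contains fewer than the fixed constant $\eta$ variables.

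First, I would fix the input: an $\eta$-simple planning scenario $A=(\Rscr,\GS,\CS,\Escr,\Sscr_0)$ and a trace $\tau = \Sscr_0 \lra \Sscr_1 \lra \cdots \lra \Sscr_N$. Let $s$ denote the total size of the input representation; in particular, $N \leq s$, the number of pairs in $\CS$ is at most $s$, and each configuration $\Sscr_j$ in $\tau$ contains at most $s$ facts, hence at most $s$ ground terms and timestamps in $\values(\Sscr_j)$. The algorithm iterates over every configuration $\Sscr_j$ in $\tau$ and every pair $\langle \Sscr_i, \Cscr_i \rangle$ in $\CS$, and for each such pair checks whether $\Sscr_j$ is critical by witness via that particular pair.

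The key step is the following: because $A$ is $\eta$-simple, the multiset $\Sscr_i$ contains strictly fewer than $\eta$ variables in total, so the number of candidate ground substitutions $\sigma$ with $\dom(\sigma) = \var(\Sscr_i)$ whose range lies in $\values(\Sscr_j)$ is bounded by $|\values(\Sscr_j)|^\eta \leq s^\eta$. For each such $\sigma$, we verify in polynomial time whether $\Sscr_i \sigma \subseteq \Sscr_j$ (multiset inclusion) and whether the ground time constraints $\Cscr_i \sigma$ all evaluate to true. By Definition \ref{def:critical-goal-config}, $\Sscr_j$ is critical w.r.t. $\CS$ iff some such $\sigma$ succeeds for some pair $\langle \Sscr_i, \Cscr_i \rangle$. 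The trace $\tau$ is compliant iff no $\Sscr_j$ is critical, so the algorithm accepts iff every enumeration attempt fails.

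The total number of substitutions examined is bounded by $N \cdot |\CS| \cdot s^\eta \leq s^{\eta+2}$, and each check runs in time polynomial in $s$. Since $\eta$ is a fixed constant, this yields an overall polynomial-time procedure, placing the trace compliance problem in $\P$. The only subtlety is that the exponent in the running time depends on $\eta$, so the algorithm is not uniform in $\eta$; however, because Proposition \ref{prop:BPTS-poly} fixes $\eta$ in advance (as emphasized at the start of the paragraph preceding it), this is exactly the intended meaning of the statement and poses no obstacle.
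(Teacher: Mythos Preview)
Your proposal is correct and follows essentially the same approach as the paper: the paper observes (in the paragraph immediately preceding the proposition) that a configuration with $m$ facts admits at most $m^\eta$ candidate ground substitutions, so a brute-force search decides compliance in polynomial time for fixed $\eta$. Your write-up is simply a more detailed and carefully quantified version of that same argument.
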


\noindent The final fundamental problem for planning scenarios is the following.

\begin{definition}
\label{def:rule-app-prob}
The \emph{rule applicability problem} is to determine, given a planning scenario $A=(\Rscr,\GS,\CS,\Escr,\Sscr_0)$, a rule $r \in \Rscr \cup \Escr$, and a configuration $\Sscr$, whether or not $r$ is applicable to $\Sscr$.
\end{definition}

\begin{proposition}
\label{prop:rule-app-npc}
The rule applicability problem is $\NP$-complete.
\end{proposition}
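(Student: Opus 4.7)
The plan is to prove $\NP$-completeness by combining a guess-and-verify argument for membership with a reduction from graph homomorphism for hardness, closely mirroring the proof of Proposition \ref{prop:goal-rec-npc}.

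For membership in $\NP$: given an input $(A, r, \Sscr)$ where $r$ is an instantaneous rule $W \mid \Cscr \lra W'$ (the case $r = \Tick$ being trivial), the witness is the ground substitution $\sigma$ itself, with $\dom(\sigma) = \var(W \cup W')$, mapping each variable of $\var(W)$ to a ground term or natural number in $\values(\Sscr)$ and each variable of $\fresh(r)$ to a fresh ground constant outside $\values(\Sscr)$. Such a $\sigma$ has polynomial size in the input, and verification consists of (i) computing $W\sigma$ and checking $W\sigma \subseteq \Sscr$ as a multiset inclusion, (ii) evaluating each time constraint of $\Cscr\sigma$ over the assigned timestamps, and (iii) verifying that $\sigma$ restricted to $\fresh(r)$ is injective with range disjoint from $\values(\Sscr)$. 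All three steps run in polynomial time in the size of the input.

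For $\NP$-hardness: I reduce from the graph homomorphism problem, reusing the encoding from the proof of Proposition \ref{prop:goal-rec-npc}. Given graphs $G = (V,E)$ and $K = (V',E')$, fix an injection $f$ sending $V$ into a set of first-order variables and $V'$ into a set of ground constants over a single base type, and let $R$ be a fresh binary predicate symbol. Define
\[
\Sscr = \{\Time@0\} \cup \{R(f(u'), f(v'))@0 \mid (u',v') \in E'\}
\]
and let $r$ be the instantaneous rule with empty side condition and empty constraint set, precondition $W = \{\Time@T\} \cup \{R(f(u),f(v))@T \mid (u,v) \in E\}$, and postcondition $W'$ obtained from $W$ by shifting each $R$-fact's timestamp from $T$ to $T+1$. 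A ground substitution $\sigma$ matches $r$ to $\Sscr$ iff $\sigma(T) = 0$ and $(\sigma(f(u)), \sigma(f(v))) \in E'$ for every $(u,v) \in E$, which holds iff $u \mapsto f^{-1}(\sigma(f(u)))$ defines a homomorphism $G \to K$. Wrapping $r$ inside the planning scenario $A = (\{r, \Tick\}, \emptyset, \emptyset, \emptyset, \Sscr)$ yields a polynomial-time reduction.

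The main obstacle is honoring the technical formation conditions on instantaneous rules, in particular the multiplicity constraint distinguishing the consumed and created multisets and the prohibition on $\Time$-facts appearing outside the distinguished $\Time@T$ slot. The timestamp shift from $T$ to $T+1$ (which has the additional virtue of making $r$ progressing) handles the former cleanly, since $R(f(u),f(v))@T$ and $R(f(u),f(v))@(T+1)$ are distinct timestamped atomic formulas; should a stricter reading of the multiplicity condition be required, adjoining a single fact built from a fresh auxiliary predicate on one side suffices without disturbing the correspondence between matching substitutions and graph homomorphisms. \qed
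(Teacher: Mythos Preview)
Your proposal is correct and follows essentially the same approach as the paper: the paper's proof is a two-sentence sketch indicating that one should modify the proof of Proposition~\ref{prop:goal-rec-npc} by encoding $G$ into the precondition of the rule $r$ and $K$ into the configuration $\Sscr$, which is precisely what you do. Your write-up in fact supplies more detail than the paper's sketch, including the explicit treatment of the instantaneous-rule formation constraints via the timestamp shift.
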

\begin{proof}[sketch]
The proof is by a slight modification of the proof of Proposition \ref{prop:goal-rec-npc}. In this case, given graphs $G$ and $K$, we encode the graph $G$ into the precondition of the rule $r$, and we encode the graph $K$ into the configuration $\Sscr$. \qed
\end{proof}

\begin{remark}
\label{remark:poly-verification}
The witnesses used to show that these problems are in $\NP$ and $\coNP$ are all ground substitutions. Let $A=(\Rscr,\GS,\CS,\Escr,\Sscr_0)$ be an $\varmax$-simple PPS, and $\Sscr$ a configuration with the same number of facts as $\Sscr_0$. Given an appropriate ground substitution, we can verify in polynomial time in the size of $A$ that $\Sscr$ is a goal configuration w.r.t. $\GS$. In the proof of Theorem \ref{thm:upper-bound}, for ease of exposition, we will also assume that these ground substitutions come with a pointer to the appropriate pair $\langle \Sscr_i, \Cscr_i \rangle$ in $\GS$ to which the substitution should be applied. Similarly, given an appropriate ground substitution, we can verify in polynomial time in the size of $A$ that a rule $r$ is applicable to $\Sscr$, and whether or not $\Sscr'$ is the result of this application.
\end{remark}
}{

}

We now turn our attention to the computational complexity of the $(n,a,b)$-resilience problem. To establish our complexity results, we will utilize the quantifier-alternation characterization of $\PH$ (cf. \cite{arora2009complexity,stockmeyer1976polynomial,papadimitriou07book}), according to which a decision problem is in $\Sigma^\P_n$ (for $n$ odd) if and only if there exists a polynomial-time algorithm $M$ such that an input $x$ is a \emph{yes} instance of the problem if and only if
$$\exists u_1 \forall u_2 \exists u_3 \hdots \forall u_{n-1} \exists u_n ~M(x,u_1,\hdots,u_n) ~\text{accepts},$$
where the $u_i$ are polynomially-bounded in the size of $x$. We now establish an upper bound on the complexity of the $(n,a,b)$-resilience problem (cf. Definition~\ref{d-trace-ab-recursive}).

\begin{theorem}
\label{thm:upper-bound}
For $\eta$-simple PPSs with traces containing only facts of bounded size and all $a \in \mathbb{Z}^+$ and $n, b \in \mathbb{N}$, there exists a decision procedure of complexity $\Sigma^\P_{2n+1}$ for the $(n,a,b)$-resilience problem.
\end{theorem}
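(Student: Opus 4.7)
The plan is to proceed by induction on $n$, exploiting the quantifier-alternation characterization of $\PH$ recalled just above the theorem statement.

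For the base case $n=0$, I would show that $(0,a,b)$-resilience lies in $\Sigma^\P_1 = \NP$. The existential witness is a trace $\tau$ from $\Sscr_0$, annotated with rule instances as in Remark~\ref{remark:poly-verification}, together with a ground substitution witnessing that the final configuration in $\tau$ is a goal. By Proposition~\ref{prop:small-traces} (together with the bounded fact-size assumption), $\tau$ has size polynomial in the input. Deterministic verification then consists of: checking each annotated rule application step (polynomial, by Remark~\ref{remark:poly-verification}); checking compliance of every configuration in $\tau$ with $\CS$ (polynomial for $\eta$-simple scenarios, by Proposition~\ref{prop:BPTS-poly}); checking the goal-witness substitution against the final configuration; and counting that the number of $\Tick$ applications is at most $a+b$.

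For the inductive step, I would assume that $(n-1,a',b)$-resilience of $\eta$-simple PPSs is in $\Sigma^\P_{2n-1}$ for every $a' \in \mathbb{Z}^+$. Unfolding Definition~\ref{d-trace-ab-recursive}, an input scenario $A=(\Rscr,\GS,\CS,\Escr,\Sscr_0)$ is $(n,a,b)$-resilient iff there exists a $(0,a,b)$-resilient trace $\tau$ from $\Sscr_0$ such that, for every update rule $r \in \Escr$, every configuration $\Sscr_i$ in $\tau$ with $d_i \leq a$, and every ground substitution $\sigma$ matching $r$ to $\Sscr_i$, either $r\sigma$ is not applicable to $\Sscr_i$ (a deterministic polynomial-time check given the witness $\sigma$), or the reset scenario $A'=(\Rscr,\GS,\CS,\Escr,\Sscr'_{i+1})$, where $\Sscr'_{i+1}$ is the result of applying $r\sigma$ to $\Sscr_i$, is $(n-1,a-d_i,b)$-resilient. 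By the inductive hypothesis, the inner condition is $\Sigma^\P_{2n-1}$, and the overall structure $\exists\,\tau\;\forall\,(r,\Sscr_i,\sigma)\,[\Sigma^\P_{2n-1}]$ collapses into $\Sigma^\P_{2n+1}$ by the standard alternation definition.

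The main obstacle is verifying that every quantified witness is polynomially bounded in the input representation of $A$, so that no hidden exponential blowup inflates the complexity class. The trace $\tau$ is bounded via Proposition~\ref{prop:small-traces}; configurations $\Sscr_i$ are identified by position in $\tau$ and contain polynomially many polynomially-sized facts (by Proposition~\ref{prop:balanced-config} plus the bounded fact-size assumption); and each ground substitution $\sigma$ has polynomial size because it maps the variables of the input rule $r$ into $\values(\Sscr_i)$ together with a bounded supply of fresh constants indexed by $\fresh(r)$. One subtle point to flag is that rule applicability is $\NP$-complete in general (Proposition~\ref{prop:rule-app-npc}); however, since $\sigma$ is already part of the universal witness, the applicability test becomes deterministic polynomial-time and introduces no extra quantifier. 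With these size bounds in hand, counting the alternations yields the claimed $\Sigma^\P_{2n+1}$ bound.
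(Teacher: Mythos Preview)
Your proposal is correct and follows essentially the same approach as the paper: induction on $n$, with the existential witness being a compliant goal trace (polynomially bounded via Proposition~\ref{prop:small-traces}) together with a goal-recognizing substitution, and the universal witness being an update-rule instance specified by a rule, a configuration index, and a matching substitution. The only stylistic difference is that the paper explicitly unrolls the recursion into a single polynomial-time verifier $M^{a,b}_n$ sitting beneath $2n+1$ alternating quantifiers, whereas you invoke the inductive hypothesis in oracle form and appeal to the standard collapse $\exists\,\forall\,[\Sigma^\P_{2n-1}] \subseteq \Sigma^\P_{2n+1}$; both are equivalent.
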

\ifthenelse{\boolean{longversion}}{
\begin{proof}
We show by induction on $n$ that, for each $a \in \mathbb{Z}^+$ and $b \in \mathbb{N}$, there exists a polynomial-time algorithm $M^{a,b}_n$ such that an $\varmax$-simple PPS $A=(\Rscr,\GS,\CS,\Escr,\Sscr_0)$ is $(n,a,b)$-resilient if and only if
\begin{equation}
\label{eq:pi-eq}
\exists \Tscr_0 \forall \rho_1 \exists \Tscr_1 \hdots \forall \rho_n \exists \Tscr_n ~M^{a,b}_n(A,\tau_0,\tau_1,\hdots,\tau_n,\rho_1,\hdots,\rho_n) ~\text{accepts}.
\end{equation}
The existentially quantified variables $\Tscr_i$ range over triples of the form $(\tau_i,\sigma_i,j_i)$, where $\tau_i$ is a trace of $\Rscr$-rules and $\sigma_i$ is a ground substitution from $\langle \Sscr_{j_i}, C_{j_i} \rangle$ in $\GS$ to the last configuration of $\tau_i$. By Proposition \ref{prop:small-traces}, such a witness $\Tscr_i$ is polynomially-bounded in the size of the input representation of $A$. The universally quantified variables $\rho_i$ range over triples of the form $(r_i,\sigma_i,j_i)$, where $r_i \in \Escr$ and $\sigma_i$ is a ground substitution from the $j_i^{th}$ configuration of $\tau_{i-1}$ to the first configuration of $\tau_i$. The witnesses $\rho_i$ are also clearly polynomially-bounded in the size of the input representation of $A$.

For the base case, the algorithm $M^{a,b}_0$ first verifies that $A$ meets the syntactic requirements of an $\varmax$-simple PPS. If so, then we verify, given $\Tscr_0 = (\tau_0,\sigma_0,j_0)$, that $\tau_0$ has at most $a+b$ applications of the $Tick$ rule, is compliant, and leads to a goal. By Proposition \ref{prop:BPTS-poly}, since $\tau_0$ is polynomially-bounded in $A$, we can verify compliance of $\tau_0$ in polynomial time in $A$. By Remark \ref{remark:poly-verification}, we can verify in polynomial time in $A$, given $(\sigma_0,j_0)$, that the last configuration of $\tau_0$ is a goal. Hence $M^{a,b}_0(A,\Tscr_0)$ runs in polynomial time, and $A$ is $(0,a,b)$-resilient if and only if $\exists \Tscr_0 ~M^{a,b}_0(A,\Tscr_0)$ accepts.

Now suppose inductively that we have, for each $a' \in \mathbb{Z}^+$ and $b' \in \mathbb{N}$, algorithms $M^{a',b'}_k$ satisfying (Eq.~$\ref{eq:pi-eq}$) with $n = k$. Fix some $a \in \mathbb{Z}^+$ and $b \in \mathbb{N}$, and we define an algorithm $M^{a,b}_{k+1}$ which takes inputs of the form $(A,\Tscr,\Tscr',\Tscr_1,\hdots,\Tscr_k,\rho,\rho_1,\hdots,\rho_k)$. Let $\Tscr = (\tau,\sigma,j)$, $\Tscr' = (\tau',\sigma',j')$, and $\rho = (r,\sigma^\ast,i)$. Furthermore, let $t_0$ denote the global time in the initial configuration $\Sscr_0$, $\lvert \tau \rvert$ denote the length of $\tau$, $\mathcal{S}'_{i+1}$ denote the initial configuration of $\tau'$, $t_i$ denote the global time in the $i^{th}$ configuration $\mathcal{S}_i$ of $\tau$, and $d_i = t_i - t_0$. We now describe the run of $M^{a,b}_{k+1}$ on this input.

First, check that $\tau$ and $\tau'$ are compliant traces to a goal configuration. Then, check if $d_i \leq a$; if this check fails, then we halt and accept, since by Definition \ref{d-trace-ab-recursive}, update rules cannot be applied after more than $a$ time steps. Then, check if $\Sscr_i \lra_r \Sscr'_{i+1}$, by applying the ground substitution $\sigma^\ast$ to $r$ and checking that it is applicable to $\Sscr_i$. If this checks fails, then we halt and accept, since $r$ is not an applicable update rule to $\Sscr_i$. Otherwise, check that $\Sscr'_{i+1}$ is the correct result of applying this instance of $r$ to $\Sscr_i$. If this check fails, then reject, since $\tau'$ cannot be a valid reaction trace. Finally, let $A' = (\Rscr,\GS,\CS,\Escr,\Sscr'_{i+1})$, and simulate $M^{a-d_i,b}_k$ on the input $(A',\Tscr',\Tscr_1,\hdots,\Tscr_k,\rho_1,\hdots,\rho_k)$. If the result of this simulation is that $M^{a-d_i,b}_k$ accepts the input, then we halt and accept, since by the inductive hypothesis, $\tau'$ must be a $(k,a-d_i,b)$-resilient reaction trace. Otherwise, we reject.

Taking into account the inductive hypothesis and Remark \ref{remark:poly-verification}, it is clear that $M^{a,b}_{k+1}$ runs in polynomial time in the size of its input. Furthermore, it follows immediately by inspection of Definition \ref{d-trace-ab-recursive} that $A$ is $(k+1,a,b)$-resilient if and only if
$$\exists \Tscr_0 \forall \rho_1 \exists \Tscr_1 \hdots \forall \rho_{k+1} \exists \Tscr_{k+1} ~M^{a,b}_{k+1}(A,\Tscr,\Tscr_1,\hdots,\Tscr_{k+1},\rho,\rho_1,\hdots,\rho_{k+1}) ~\text{accepts}.$$
This concludes the inductive argument. It follows immediately from the quantifier-alternation characterization of $\PH$ that the $(n,a,b)$-resilience problem for $\eta$-simple PPSs with traces containing only facts of bounded size is in $\Sigma^\P_{2n+1}$. \qed
\end{proof}
}{

}

\begin{remark}
Even without assuming $\eta$-simplicity, a slight variation of the above argument gives a decision procedure of complexity $\Sigma^\P_{2n+2}$ for the $(n,a,b)$-resilience problem for PPSs with traces containing facts of bounded size. To modify the argument, we allow each universal quantifier to range over an additional ground substitution, which is used in the verification algorithm $M^{a,b}_n$ to check that an arbitrary configuration in the preceding witness trace is non-critical. Note that this check can be done in polynomial time (cf. Remark \ref{remark:poly-verification}). If this check succeeds for \emph{all} configurations and \emph{all} such ground substitutions, then every witness trace is compliant.
\end{remark}

\ifthenelse{\boolean{longversion}}{
\begin{remark}
In~\cite{kanovich16formats}, we showed that for PPSs with traces containing facts of bounded size, the $n$-time-bounded-realizability problem is in $\NP$, when it is assumed that compliance can be checked in polynomial-time. Note that by Proposition~\ref{def:trace-compliance-conpc}, this task is in fact $\coNP$-complete in general. The requirement in the above proof that we confine attention to $\varmax$-simple PPSs is one way to make this assumption precise; furthermore, $\varmax$-simplicity is easy to verify in polynomial time in the size of a planning scenario $A$.
\end{remark}
}{
}

In fact, even for $1$-simple PPSs, the $(n,a,b)$-resilience problem is $\Sigma^\P_{2n+1}$-hard. We show this by a reduction from $\Sigma_{2n+1}$-SAT, the language of true quantified Boolean formulas (QBF) with ${2n+1}$ quantifier alternations, where the first quantifier is existential and the underlying propositional formula is in $3$-CNF form. This problem is known to be $\Sigma^\P_{2n+1}$-complete \cite{stockmeyer1976polynomial}. Recall that the truth of a quantified Boolean formula can be analyzed by considering the \emph{QBF evaluation game} for the formula. In this game, two players, \emph{Spoiler} and \emph{Duplicator}, take turns choosing assignments to the formula's quantified variables. Duplicator chooses assignments for existentially-quantified variables with the goal of satisfying the underlying Boolean formula, while Spoiler chooses assignments for universally-quantified variables with the goal of falsifying it. The game concludes once assignments have been chosen for all of the quantified variables. A QBF $\psi$ is true if and only if Duplicator has a winning strategy in this game \cite{papadimitriou07book}.

In our reduction, we encode the positions of this QBF evaluation game into configurations, where a position of the QBF evaluation game for a formula
$$\psi := \exists \overline{v}_1 \forall \overline{v}_2 \exists \overline{v}_3 \hdots \forall \overline{v}_{2n} \exists \overline{v}_{2n+1} \varphi (\overline{v}_1,\overline{v}_2,\overline{v}_3, \hdots, \overline{v}_{2n+1})$$
is a sequence $\mathcal{P} = V_1,\hdots,V_j$ of assignments to the variables in $\overline{v}_1,\hdots,\overline{v}_j$ for some ${j \leq 2n+1}$. If $j$ is even, then we say that the position $\mathcal{P}$ \emph{belongs to Duplicator}; otherwise, we say that it \emph{belongs to Spoiler}. The player who owns a given position makes the next move, choosing an assignment for the variables in the tuple $\overline{v}_{j+1}$. We use system rules to model assignments made by Duplicator, while update rules are used to model assignments made by Spoiler. Intuitively, the goal configurations are those positions of the game which encode assignments satisfying the underlying formula $\varphi$. 

\begin{theorem}
\label{thm:lower-bound}
For all $a \in \mathbb{Z}^+$ and $b \in \mathbb{N}$, there exists a polynomial-time reduction from the $\Sigma_{2n+1}$-$\SAT$ problem to the $(n,a,b)$-resilience problem. Furthermore, the computed instance is always a $1$-simple progressing planning scenario with traces containing only facts of bounded size.
\end{theorem}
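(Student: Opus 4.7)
My plan is to reduce $\Sigma_{2n+1}$-$\SAT$ to the $(n,a,b)$-resilience problem by encoding the QBF evaluation game as a progressing planning scenario $A=(\Rscr,\GS,\CS,\Escr,\Sscr_0)$. Given an input formula $\psi = \exists\overline{v}_1\forall\overline{v}_2\cdots\exists\overline{v}_{2n+1}\varphi$ in $3$-CNF with clauses $c_1,\ldots,c_m$, I introduce predicates $\mathsf{Phase}$, $\mathsf{Unassigned}$, $\mathsf{Assigned}$, $\mathsf{PossibleVal}$, $\mathsf{UnsatClause}$, $\mathsf{Sat}$, $\mathsf{Slot}$, $\mathsf{Aux}$, together with a single goal predicate $\mathsf{GoalMarker}$ and constants $\top,\bot$, one per QBF variable, one per clause, and one per phase index. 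The initial configuration $\Sscr_0$ has $\Time@0$, $\mathsf{Phase}(1)@0$, $\mathsf{GoalMarker}@0$, an $\mathsf{Unassigned}(v)@0$ and both $\mathsf{PossibleVal}(v,\top)@0$ and $\mathsf{PossibleVal}(v,\bot)@0$ for each QBF variable $v$, $\mathsf{UnsatClause}(c_k)@0$ for each clause, and $2n+1+m$ copies of $\mathsf{Slot}@0$; I take $\CS=\emptyset$, which trivially makes the scenario $1$-simple, and $\GS=\{\langle\{\mathsf{GoalMarker}@T_0,\mathsf{Sat}(c_1)@T_1,\ldots,\mathsf{Sat}(c_m)@T_m\},\emptyset\rangle\}$.

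For each round $i\in\{1,\ldots,2n+1\}$ I introduce a ``commit'' rule whose precondition has $\mathsf{Phase}(i)$, the facts $\mathsf{Unassigned}(v_{i,j})$, the side-condition facts $\mathsf{PossibleVal}(v_{i,j},x_j)$ (so the ground substitution freely picks each $x_j\in\{\top,\bot\}$), and a $\mathsf{Slot}$, and whose postcondition produces $\mathsf{Phase}(i+1)@T$, the $\mathsf{Assigned}(v_{i,j},x_j)@T$, and one $\mathsf{Aux}@(T+1)$ fact that secures the progressing condition. For odd $i$ the rule is placed in $\Rscr$ (Duplicator); for even $i$ it is placed in $\Escr$ as an \SUR (Spoiler's interruption) and \emph{also} copied into $\Rscr$, so that Duplicator's initial trace $\tau$ can traverse Spoiler's phases by default when no interruption occurs. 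For each literal $l$ in each clause $c_k$, I add a system rule that consumes $\mathsf{UnsatClause}(c_k)$ and a $\mathsf{Slot}$ and produces $\mathsf{Sat}(c_k)@T$ together with $\mathsf{Aux}@(T+1)$, using the $\mathsf{Assigned}$ fact witnessing $l$ as a side condition. Every rule is balanced, consumes only past or current facts, and creates $\mathsf{Aux}@(T+1)$, so it is progressing; every fact has arity at most two, so sizes are bounded; and the whole construction is plainly polynomial in $|\psi|$. Because $\mathsf{Phase}$, $\mathsf{Assigned}$, and $\mathsf{Sat}$ can all be placed at the current time $T$, the entire trace can be executed with zero $\Tick$ applications, so $d_i=0$ at every Spoiler interruption and the $a+b$ bound is trivially met for arbitrary $a\in\mathbb{Z}^+$ and $b\in\mathbb{N}$.

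Correctness follows by induction on $n$. If $\psi$ is true, with Duplicator QBF winning strategy $S$, I let Duplicator build $\tau$ by applying $S$ at the odd rounds, a fixed default assignment at the even rounds, and then one satisfying literal per clause; since $S$ wins against every Spoiler play, it satisfies $\varphi$ against the defaults too, so $\tau$ reaches a goal, and for any Spoiler interruption of round $2k$ producing some $\overline{v}_{2k}'$ the reaction trace applies $S$ from the post-update configuration and is $(n-1,a-d_i,b)$-resilient by the inductive hypothesis. Conversely, if $\psi$ is false, then against any Duplicator trace $\tau$, Spoiler reads $\overline{v}_1$ from the first commit and realizes her winning QBF strategy as a sequence of interruptions at $\mathsf{Phase}(2),\mathsf{Phase}(4),\ldots$; the resulting final assignment falsifies $\varphi$, so at least one $\mathsf{Sat}(c_k)$ is never produced and no goal is reached. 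The step I expect to be the main obstacle is ruling out \emph{out-of-order} Spoiler interruptions: the argument is that any such deviation leaves some $\overline{v}_{2k}$ at a value Duplicator herself chose, and, since $S$ wins against every Spoiler move, this can only help Duplicator; hence Spoiler's optimal play is to interrupt every even phase sequentially, making the resilience game coincide with the QBF evaluation game.
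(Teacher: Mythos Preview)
Your reduction is correct in spirit and would go through, but it takes a genuinely different route from the paper. The key divergence is how you handle even (Spoiler) rounds in the absence of an interruption. You \emph{duplicate} each even-round commit rule into both $\Rscr$ and $\Escr$ and have Duplicator play a fixed default, so the candidate trace $\tau$ runs through all $2n+1$ phases before verifying the clauses. The paper instead equips Duplicator, at every even phase $i$, with a one-step ``trivial win'' rule $r^{win}_i$ that immediately produces a goal-marking fact $W$ (and the goal specification admits $W$ as an alternative to the clause-by-clause $Sat$ facts). Consequently the paper's candidate traces are always of length two (one $assign^\exists$ followed by one $r^{win}$), each containing \emph{exactly one} configuration where an update rule is applicable; this makes the induction a clean, single-branch recursion that mirrors the QBF game literally.

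Your approach buys you a slightly smaller alphabet (no $W$, no $r^{win}$ rules) at the cost of a more delicate inductive argument. The issue you flag as ``out-of-order interruptions'' is real and your informal fix is right, but your induction on $n$ as stated has a gap: when Spoiler interrupts at phase $2k$ with $k>1$, the reaction trace corresponds to a $\Sigma_{2(n-k)+1}$ subgame, so the inductive hypothesis only yields $(n-k,a,b)$-resilience, whereas Definition~\ref{d-trace-ab-recursive} demands $(n-1,a,b)$-resilience of the reaction trace, which is a strictly stronger requirement when $k>1$. The fix is to strengthen the hypothesis: prove by reverse induction on $j$ that the $S$-guided trace from phase $2j+1$ is $(m,a,b)$-resilient for \emph{all} $m\ge 0$, using that any reaction trace starts at some phase $2k+1>2j+1$ and hence falls under the (strengthened) inductive hypothesis. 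This works because reaction traces in your system have strictly fewer interruptible configurations than the trace they react to, so the extra Spoiler budget is harmless. With that adjustment your argument closes; the paper's $r^{win}$ device simply sidesteps the need for it by ensuring each trace has a single interruption point.
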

\ifthenelse{\boolean{longversion}}{
\begin{proof}
Let $\psi := \exists \overline{v}_1 \forall \overline{v}_2 \exists \overline{v}_3 \hdots \forall \overline{v}_{2n} \exists \overline{v}_{2n+1} \varphi (\overline{v}_1,\overline{v}_2,\overline{v}_3, \hdots, \overline{v}_{2n+1})$ be an instance of $\Sigma_{2n+1}$-SAT, where each $\overline{v}_i = v^i_1, \hdots, v^i_{k_i}$ is a tuple of variables. We assume that $\overline{v}_i$ and $\overline{v}_j$ are disjoint whenever $i \neq j$. Furthermore, we assume that $\varphi = C_1 \land \hdots \land C_m$ is a 3-CNF formula with $m$ clauses and whose $k = \sum_{i \leq 2n+1} k_i$ variables are precisely those in the combined tuple $\overline{v} = \overline{v}_1 \overline{v}_2 \hdots \overline{v}_{2n+1}$. We now describe a $1$-simple PPS $A=(\Rscr,\GS,\CS,\Escr,\Sscr_0)$, which we will show is $(n,1,0)$-resilient if and only if $\psi$ is a true quantified Boolean formula.

\gap
\noindent \textbf{The alphabet} \\
We set $\Sigma_G = \{T(true)\}$ and $\Sigma_C = \emptyset$, and we define
\begin{align*}
\Sigma_S := \{ B, T, F, W, true, false, Junk \} ~&\cup \{ Rnd_i \mid 0 \leq i \leq 2n+1 \} \\
&\cup \{ Unk_i \mid 1 \leq i \leq 2n+1 \} \\
&\cup \{ Val_i \mid 1 \leq i \leq 2n+1 \} \\
&\cup \{ I_{C_j} \mid 1 \leq j \leq m \} \\
&\cup \{ Sat_j \mid 1 \leq j \leq m \},
\end{align*}
where $B$, $T$, and $F$ are unary relation symbols of type $bool$, $W$ is a $0$-ary relation symbol, $true$ and $false$ are constants of type $bool$, each $Val_i$ is a $k_i$-ary relation symbol of type $bool^{k_i}$, and $Junk$, $Rnd_i$, $Unk_i$, $I_{C_j}$ and $Sat_i$ are all $0$-ary relation symbols. 

\gap
\noindent \textbf{Initial / goal / critical configurations} \\
We set $\mathcal{CS} = \emptyset$ and $\mathcal{GS} = \{ \langle W, T(true) \emptyset \rangle, \langle \{T(true)\} \cup \{ Sat_j \mid j \leq m \}, \emptyset \rangle \}$. The initial configuration is
\begin{align*}
\mathcal{S}_0 &:= \{ \Time@0, Rnd_0@0, T(true)@0, F(false)@0 \} \\
&\hspace*{30pt} \cup \{ Unk_i@0 \mid 1 \leq i \leq n+2 \} \\
&\hspace*{30pt} \cup \{ I_{C_j}@0 \mid 1 \leq j \leq m \} \\
&\hspace*{30pt} \cup \{ \underbrace{B(true)@0, \hdots, B(true)@0}_{2k ~\text{copies}} \} \\
&\hspace*{30pt} \cup \{ \underbrace{B(false)@0, \hdots, B(false)@0}_{2k ~\text{copies}} \} \\
&\hspace*{30pt} \cup \{ \underbrace{Junk@0, Junk@0, \hdots, Junk@0}_{2n+m+1 ~\text{copies}} \}.
\end{align*}

\noindent \textbf{System Rules} \\
For each \emph{odd} $i$ with $1 \leq i \leq 2n+1$, we have a system rule $assign^\exists_i$ as follows:
\begin{align*}
\Time@T, ~&B(y_1)@T_1, \hdots, B(y_{k_i})@T_{k_i}, \\
&\red{Rnd_{i-1}@T_{k_i+1}, Unk_i@T_{k_i+2}, Junk@T_{k_i+3}} \\
&\mid \{ T_j = T \mid j \leq k_i+3 \} \\
&\longrightarrow_{assign^\exists_i} \Time@T,
B(y_1)@T_1, \hdots, B(y_{k_i})@T_{k_i}, \\
&\hspace*{50pt} \blue{Rnd_i@T, Val_i(y_1,\hdots,y_{k_i})@T, Junk@(T+1)}.
\end{align*}

\noindent For each \emph{even} $i$ with $2 \leq i \leq 2n+1$, we have a system rule $r^{win}_i$ as follows:
\begin{align*}
\Time@T, ~&\red{Rnd_{i-1}@T_1, Unk_i@T_2, Junk@T_3} \mid \{ T_j = T \mid j \leq 3 \} \\
&\longrightarrow_{r^{win}_i}  \Time@T, \blue{W@T, Junk@T, Junk@(T+1)}.
\end{align*}

\noindent If $v^j_i$ occurs positively in $C_l$ for $l \leq m$, we have a system rule $posElim^{i,j}_l$ as follows:
\begin{align*}
\Time@T, ~&Val_i(y_1,\hdots,y_{j-1},b,y_{j+1},\hdots,y_{k_i})@T_1, T(b)@T_2, Rnd_{2n+1}@T_3, \\
&\red{I_{C_l}@T_4, Junk@T_5} \mid \{ T_j = T \mid j \leq 5 \} \\
&\longrightarrow_{posElim^{i,j}_l}  \Time@T, Val_i(y_1,\hdots,y_{j-1},b,y_{j+1},\hdots,y_{k_i})@T_1, \\
&\hspace*{60pt} T(b)@T_2, Rnd_{2n+1}@T_3, \blue{Sat_l@T, Junk@(T+1)}.
\end{align*}

\noindent If $v^j_i$ occurs negatively in $C_l$ for $l \leq m$, we have a system rule $negElim^{i,j}_l$ as follows:
\begin{align*}
\Time@T, ~&Val_i(y_1,\hdots,y_{j-1},b,y_{j+1},\hdots,y_{k_i})@T_1, F(b)@T_2, Rnd_{2n+1}@T_3, \\
&\red{I_{C_l}@T_4, Junk@T_5} \mid \{ T_j = T \mid j \leq 5 \} \\
&\longrightarrow_{negElim^{i,j}_l}  \Time@T, Val_i(y_1,\hdots,y_{j-1},b,y_{j+1},\hdots,y_{k_i})@T_1, \\
&\hspace*{60pt} F(b)@T_2, Rnd_{2n+1}@T_3, \blue{Sat_l@T, Junk@(T+1)}.
\end{align*}

\noindent \textbf{Update Rules} \\
For each \emph{even} $i$ with $2 \leq i \leq 2n+1$, we have a system update rule $assign^\forall_i$ as follows:
\begin{align*}
\Time@T, ~&B(y_1)@T_1, \hdots, B(y_{k_i})@T_{k_i}, \\
&\red{Rnd_{i-1}@T_{{k_i}+1}, Unk_i@T_{k_i + 2}, Junk@T_{k_i+3}} \mid \ \{ T_j \leq T \mid j = k_i+3 \} \\
&\longrightarrow_{assign^\forall_i}  \Time@T, B(y_1)@T_1, \hdots, B(y_{k_i})@T_{k_i}, \\
&\hspace*{50pt} \blue{Rnd_i@T, Val_i(y_1,\hdots,y_{k_i})@T, Junk@(T+1)}.
\end{align*}

\noindent \textbf{Size of the System} \\
There are $n+1$ \textit{$assign^\exists$} rules, $2n$ \textit{$r^{win}$} rules, $3m$ \textit{Elim} rules (one for each literal of each clause), and $n$ \textit{$assign^\forall$} rules. Hence $\mathcal{R}$ and $\mathcal{E}$ are linear in the size of $\psi$. The size of the critical and goal configuration specifications, $\mathcal{CS}$ and $\mathcal{GS}$, are constant and linear, respectively, in the size of $\psi$. The initial configuration $\mathcal{S}_0$ is also clearly linear in the size of $\psi$. Hence $A$ can be computed in polynomial time in the size of $\psi$.

\gap
\textbf{Correctness of the Reduction} \\
Observe that $A$ is a progressing planning scenario, and since $\CS$ is empty, it is $1$-simple. Furthermore, note that the $Val_i$ facts are the largest facts of the system, but due to the lack of function symbols, are bounded in size. Let $\mathcal{P} = V_1, \hdots, V_j$ be a position of the QBF evaluation game on $\psi$. If $X$ is a strategy for Duplicator (resp. Spoiler), we say that $\mathcal{P}$ is \textit{consistent} with $X$ if, for each even (resp. odd) $i < j$, we have that $V_{i+1}$ is the assignment chosen by Duplicator (resp. Spoiler) from the position $V_1, \hdots, V_i$ when Duplicator (resp. Spoiler) plays according to $X$. For every position $\mathcal{P} = V_1, \hdots, V_j$, we define a corresponding configuration:
\begin{align*}
\mathcal{S}(V_1,\hdots,V_j) = \{ \Time@0, ~&Rnd_j@0, T(true)@0, F(false)@0 \} \\
&\cup \{ Unk_i@0 \mid j < i \leq n+2 \} \\
&\cup \{ Val_i(V_i)@0 \mid 1 \leq i \leq j \} \\
&\cup \{ I_{C_i}@0 \mid 1 \leq i \leq m \} \\
&\cup \{ \underbrace{B(true)@0, \hdots, B(true)@0}_{2k ~\text{copies}} \} \\
&\cup \{ \underbrace{B(false)@0, \hdots, B(false)@0}_{2k ~\text{copies}} \} \\
&\cup \{ \underbrace{Junk@0, Junk@0, \hdots, Junk@0}_{2n+m+1-j ~\text{copies}} \} \\
&\cup \{ \underbrace{Junk@1, Junk@1, \hdots, Junk@1}_{j ~\text{copies}} \},
\end{align*}
where $Val_i(V_i)@0$ denotes the timestamped fact $Val_i(b_1,\hdots,b_{k_i})@0$ where, for each $n \leq k_i$, we have that $b_n = true$ if $V_i(v^i_n) = 1$, and $b_n = 0$ otherwise. The configuration $\mathcal{S}(V_1,\hdots,V_j)$ encodes a position of the QBF evaluation game, in which Duplicator's chosen assignments are the $V_i$ for odd $i$, and Spoiler's chosen assignments are the $V_i$ for even $i$. Note that we can view $\mathcal{S}_0$ as a configuration of this form for the empty sequence of assignments, which corresponds to the starting position of the game. Furthermore, for each odd $j \leq n$, observe that $r^{win}_j$ is applicable to $\mathcal{S}(V_1,\hdots,V_j)$, and so we also define $\mathcal{S}(V_1,\hdots,V_j,W)$ to be the resulting (goal) configuration.

The applicable rules are strictly controlled by the $Rnd_i$ predicates. The only rule applicable to $\mathcal{S}_0$ is $assign^\exists_1$, after which the configuration has the form $\mathcal{S}_1(V_1)$ for some assignment $V_1$ to the variables in $\overline{v}_1$. From configurations of the form $\mathcal{S}_1(V_1)$, both the system rule $r^{win}_2$ and the update rule $assign^\forall_2$ are applicable, leading to configurations of the form $\mathcal{S}_2(V_1,W)$ and $\mathcal{S}_2(V_1,V_2)$ for some assignment $V_2$ to the variables in $\overline{v}_2$, respectively. This is summarized in Figure \ref{fig:base-case-tree}, where blue lines indicate transitions due to system rules, and red lines indicate transitions due to update rules.

\begin{figure}
\centering
\begin{tikzpicture}
[level 1/.style={sibling distance=40mm}, 
 level 2/.style={sibling distance=18mm}]

\node (root) {$S_0$} 
 child {node (s1) {$S_1(V_1)$} 
  child {node (s2w) {$S_2(V_1, W)$} edge from parent[blue]}
  child {node (s2v) {$S_2(V_1, V_2)$} edge from parent[red]}
  edge from parent[blue]};

\node at ($(root) + (-3,0)$) {$Rnd_0$};
\node at ($(s1) + (-3,0)$) {$Rnd_1$};
\node at ($0.5*(s2w) + 0.5*(s2v) + (-3,0)$) {$Rnd_2$};

\end{tikzpicture}
\caption{Possible two-step traces starting from $\mathcal{S}_0$}
\label{fig:base-case-tree}
\end{figure}
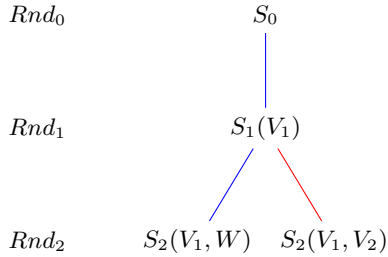

For even $i < 2n$, only $assign^\exists_{i+1}$ is applicable to a configuration of the form $\mathcal{S}_i(V_1,\hdots,V_i)$, after which both $assign^\forall_{i+2}$ and $r^{win}_{i+2}$ are applicable. This is summarized in Figure \ref{fig:ind-step-tree}, where each $V_j$ is an assignment to the appropriate variables.

\begin{figure}
\centering
\begin{tikzpicture}
[level 1/.style={sibling distance=40mm}, 
 level 2/.style={sibling distance=45mm}]

\node (root) {$S_i(V_1,\hdots,V_i)$} 
 child {node (s1) {$S_{i+1}(V_1,\hdots,V_i,V_{i+1})$} 
  child {node (s2w) {$S_{i+2}(V_1,\hdots,V_i,V_{i+1},W)$} edge from parent[blue]}
  child {node (s2v) {$S_{i+2}(V_1,\hdots,V_i,V_{i+1},V_{i+2})$} edge from parent[red]}
  edge from parent[blue]};

\node at ($(root) + (-6,0)$) {$Rnd_i$};
\node at ($(s1) + (-6,0)$) {$Rnd_{i+1}$};
\node at ($0.5*(s2w) + 0.5*(s2v) + (-6,0)$) {$Rnd_{i+2}$};

\end{tikzpicture}
\caption{Possible two-step traces starting from $\mathcal{S}_i(V_1,\hdots,V_i)$}
\label{fig:ind-step-tree}
\end{figure}
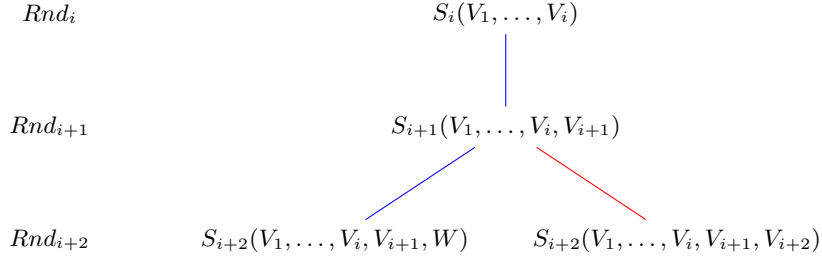

From a configuration of the form $\mathcal{S}_{2n}(V_1,\hdots,V_{2n})$, the only applicable rule is $assign^\exists_{2n+1}$, leading to a configuration of the form $\mathcal{S}_{2n+1}(V_1,\hdots,V_{2n},V_{2n+1})$. After this rule, the only rules that may be applicable are the $posElim^{i,j}_l$ and $negElim^{i,j}_l$ rules. If, for each $l \leq m$, there is some $i \leq 2n+1$ and $j \leq k_i$ such that at least one of $posElim^{i,j}_l$ or $negElim^{i,j}_l$ is applicable to $\mathcal{S}_{2n}(V_1,\hdots,V_{2n},V_{2n+1})$, then there exists a trace with no occurrences of the $Tick$ rule starting from $\mathcal{S}_{2n+1}(V_1,\hdots,V_{2n},V_{2n+1})$ and applying these $m$ rules which leads to a goal configuration containing the $Sat_j$ facts for each $j \leq m$. In other words, if $V = V_1 \cup \hdots \cup V_{2n+1}$ is a satisfying assignment, then there exists a compliant trace from $\mathcal{S}_{2n+1}(V_1,\hdots,V_{2n},V_{2n+1})$ to a goal configuration containing a fact $Sat_j$ for each $j \leq m$. Intuitively, this trace corresponds to verifying that each clause is satisfied by the valuation $V = V_1 \cup \hdots \cup V_{2n+1}$. Note that there are sufficiently many $Junk@0$ facts in the initial configuration to allow the entire game and verification processes to be carried out without any applications of the $\Tick$ rule.

We need to show that $\psi$ is true if and only if there exists an $(n,1,0)$-resilient trace for the PPS $A$. Intuitively, the argument proceeds as follows. Figures \ref{fig:base-case-tree} and \ref{fig:ind-step-tree} encode the entire game tree of the QBF evaluation game, plus some additional leaves at each odd level corresponding to a goal state. The system is Duplicator, and the update rules are Spoiler. If either player has a winning strategy, then they can reach a configuration $\hat{S} = \mathcal{S}_{2n+1}(V_1,\hdots,V_{2n+1})$, where $V = V_1 \cup \hdots \cup V_{2n+1}$ either satisfies or falsifies $\varphi$. If the configuration of this form satisfies $\varphi$, in which case Duplicator had the winning strategy, then the system will reach a goal configuration containing $Sat_j$ for each $j$ with $1 \leq j \leq m$. Otherwise, Spoiler has a winning strategy, and the system will get stuck.

For the forward direction, suppose that $\psi$ is true, and hence Duplicator has a (deterministic) winning strategy $X$ for the QBF evaluation game for $\psi$. We show by induction on $i$ with $0 \leq i \leq n$ that, for any game position $\mathcal{P} = V_1,\hdots,V_{2(n - i)}$, if $\mathcal{P}$ is consistent with $X$, then there exists an $(i,1,0)$-resilient trace starting from $\mathcal{S}_i(V_1,\hdots,V_{2(n - i)})$ with no applications of the $\Tick$ rule. In particular, since the initial position of the game is consistent with $X$, this implies that there exists an $(n,1,0)$-resilient trace starting from $\mathcal{S}_0$, which is what we want to show.

For the base case, $i = 0$, and so we need to show that, if $V_1,\hdots,V_{2n}$ is consistent with $X$, then there exists a compliant trace starting from $\mathcal{S}_i(V_1,\hdots,V_{2n})$ with no applications of the $\Tick$ rule. Let $V_{2n+1}$ be the assignment chosen by Duplicator from the position $V_1,\hdots,V_{2n}$ under strategy $X$. Since $X$ is a winning strategy, it follows immediately that $V = V_1 \cup \hdots \cup V_{2n+1}$ is a satisfying assignment for $\varphi$. Then, by applying $assign^\exists_{2n+1}$, the next configuration is $\mathcal{S}(V_1,\hdots,V_{2n+1})$. Then, by our earlier discussion, since $V$ is a satisfying assignment, there exists a compliant trace from $\mathcal{S}(V_1,\hdots,V_{2n+1})$ to a goal configuration. We have described a compliant trace from $\mathcal{S}_i(V_1,\hdots,V_{2n})$ to a goal configuration with no applications of the $\Tick$ rule, and so we're done.

Now suppose inductively that, for all game positions $\mathcal{P} = V_1,\hdots,V_{2(n - i)}$ consistent with $X$, there exists an $(i,1,0)$-resilient trace from $\mathcal{S}_{2(n-i)}(V_1,\hdots,V_{2(n - i)})$ with no applications of the $\Tick$ rule. Let $\mathcal{P}' = V'_1,\hdots,V'_{2(n-(i+1))}$ be a game position consistent with $X$; we want to show that there exists an $(i+1,1,0)$-resilient trace from $\mathcal{S}_{2(n-(i+1))}(V'_1,\hdots,V'_{2(n-(i+1))})$ with no occurrences of the $\Tick$ rule. Since $2(n-(i+1)) = 2n-2i-2$ is even, position $\mathcal{P}'$ belongs to Duplicator. Let $V'_{2n-2i-1}$ be the assignment chosen by Duplicator from the position $\mathcal{P}'$, and let $\tau$ denote the trace
\begin{align*}
\mathcal{S}_{2n-2i-2}&(V'_1,\hdots,V'_{2(n-(i+1))}) \\
&\longrightarrow_{assign^\exists_{2n-2i-1}} ~~\mathcal{S}_{2n-2i-1}(V'_1,\hdots,V'_{2(n-(i+1))},V'_{2n-2i-1}) \\
&\longrightarrow{r^{win}_{2n-2i}} ~~~~\mathcal{S}_{2n-2i}(V'_1,\hdots,V'_{2(n-(i+1))},V'_{2n-2i-1},W),
\end{align*}
which is clearly compliant and leads to a goal configuration. The only applicable update rule for this trace is $assign^\forall_{2n-2i}$, which is applicable only to
$$\mathcal{S}_{2n-2i-1}(V'_1,\hdots,V'_{2(n-(i+1))},V'_{2n-2i-1}),$$
and results in a configuration of the form
$$\mathcal{S}_{2n-2i-1}(V'_1,\hdots,V'_{2(n-(i+1))},V'_{2n-2i-1},V'_{2n-2i})$$
for some assignment $V'_{2n-2i}$ to the variables in the tuple $\overline{v}_{2n-2i}$. Since the game position
$$V'_1,\hdots,V'_{2n-2i-2},V'_{2n-2i-1},V'_{2n-2i}$$
extends $\mathcal{P}'$ with only one move by Spoiler, it is consistent with $X$. Then by the inductive hypothesis, that there exists an $(i,1,0)$-resilient reaction trace starting from
$$\mathcal{S}_{2n-2i}(V'_1,\hdots,V'_{2(n-(i+1))},V'_{2n-2i-1},V'_{2n-2i})$$
with no occurrences of the $\Tick$ rule. It follows that $\tau$ is a $(i+1,1,0)$-resilient trace starting from $\mathcal{S}_{2(n-(i+1))}(V'_1,\hdots,V'_{2(n-(i+1))})$ with no occurrences of the $\Tick$ rule, which is what we wanted to show. This concludes the inductive argument, from which we can conclude that, if $\psi$ is true, then $A$ is $(n,1,0)$-resilient.

The reverse direction is similar, using the fact that if $\psi$ is false, then Spoiler has a winning strategy $X$. Then, if update rules are applied so that the configuration always corresponds to a game position which is consistent with $X$, the system will reach a configuration $\mathcal{S}_{2n+1}(V_1,\hdots,V_{2n},V_{2n+1})$ where $V = V_1 \cup \hdots \cup V_{2n},V_{2n+1}$ is not a satisfying assignment for $\varphi$. It follows that the system gets stuck (i.e., no $Elim$ rules will be applicable for some clause) and cannot reach a goal configuration. Thus we can conclude that, if $\psi$ is false, then $A$ is \textit{not} $(n,1,0)$-resilient.

We have shown that $A$ is $(n,1,0)$-resilient if and only if $\psi$ is true. Now, note that
\begin{enumerate}
\item the global time in $\mathcal{S}_0$ is $0$;
\item every system and update rule contains a $Rnd_i@T'$ fact in its precondition, and the guard of the rule requires that $T' = T$, where $T$ is the global time; and
\item for any configurations $\Sscr,\Sscr'$ and system or update rule $r$ such that $\Sscr \lra_r \Sscr'$, where $Rnd_i@T_i$ is in $\Sscr$ and $Rnd_j@T_j$ is in $\Sscr'$, we have that $T_i = T_j$.
\end{enumerate}
These observations imply that every $Rnd_i$ fact in any configuration of this system reachable (by system or update rules) from $\mathcal{S}_0$ has timestamp $0$. Furthermore, no system or update rules can be applied if the global time exceeds $0$, since the guard of every rule is violated for configurations where the global time exceeds the timestamp of the unique $Rnd_i$ fact. It follows easily that a trace $\tau$ is $(n,1,0)$-resilient if and only if it is also $(n,a,0)$-resilient for all $a \in \mathbb{Z}^+$. Similarly, since goal configurations can only be reached, and updates only be applied, when the global time is $0$, we conclude that if a trace is $(n,a,b)$-resilient, then it has an $(n,a,0)$-resilient sub-trace. Furthermore, by Remark \ref{remark:parameter-observations}, if a trace is $(n,a,0)$-resilient, then it is also $(n,a,b)$-resilient. Hence
\begin{align*}
\psi ~\text{is true} &\iff A ~\text{is}~ (n,1,0)\text{-resilient} \\
&\iff A ~\text{is}~ (n,a,0)\text{-resilient for all}~ a \in \mathbb{Z}^+ \\
&\iff A ~\text{is}~ (n,a,b)\text{-resilient for all}~ a \in \mathbb{Z}^+ ~\text{and}~ b \in \mathbb{N}
\end{align*}
Therefore, for each $a \in \mathbb{Z}^+$ and $n,b \in \mathbb{N}$, we have that $\Sigma_{2n+1}$-$\SAT$ is polynomial-time reducible to the $(n,a,b)$-resilience problem.\qed
\end{proof}

\ifthenelse{\boolean{longversion}}{
\begin{remark}
In~\cite{kanovich16formats}, we claimed that the Boolean satisfiability problem can be reduced to the $n$-time-bounded-realizability problem. We have since identified an error in the original proof in that paper. However, since $n$-time-bounded realizability is equivalent to $(0,n,0)$-resilience, the above proof of Theorem \ref{thm:lower-bound} constitutes a correction to the proof, since $\Sigma_1$-$\SAT$ is just the standard Boolean satisfiability problem.
\end{remark}
}{
}

\noindent Theorems \ref{thm:upper-bound} and \ref{thm:lower-bound} immediately entail the following.

}{
\input{conf-only/lower-bound-sketch}
}

\begin{corollary}
\label{thm:sigma2n+1-complete}
The $(n,a,b)$-resilience problem for $\eta$-simple PPSs with traces containing only facts of bounded size is $\Sigma^\P_{2n+1}$-complete.
\end{corollary}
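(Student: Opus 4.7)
The plan is to obtain the corollary by simply combining the two preceding theorems. Theorem \ref{thm:upper-bound} already provides a decision procedure of complexity $\Sigma^\P_{2n+1}$ for the $(n,a,b)$-resilience problem on the class of $\eta$-simple PPSs whose traces contain only facts of bounded size, which gives the required membership in $\Sigma^\P_{2n+1}$. So the only remaining step is to establish $\Sigma^\P_{2n+1}$-hardness for this same class.

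For hardness, I would invoke Theorem \ref{thm:lower-bound}, which furnishes, for every fixed $a \in \mathbb{Z}^+$ and $b \in \mathbb{N}$, a polynomial-time reduction from $\Sigma_{2n+1}$-$\SAT$ to the $(n,a,b)$-resilience problem whose output is guaranteed to be a $1$-simple PPS in which all traces contain only facts of bounded size. Since any $1$-simple PPS is \emph{a fortiori} $\eta$-simple for every $\eta \geq 1$ (the bound on the number of variables in pairs of $\CS$ only gets looser as $\eta$ grows, cf.\ Definition~\ref{def:bounded-PTS}), the computed instances lie in the class to which the corollary refers. Combined with the $\Sigma^\P_{2n+1}$-completeness of $\Sigma_{2n+1}$-$\SAT$ \cite{stockmeyer1976polynomial}, this reduction witnesses $\Sigma^\P_{2n+1}$-hardness of $(n,a,b)$-resilience on $\eta$-simple PPSs of bounded fact size.

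Putting these two parts together yields the desired $\Sigma^\P_{2n+1}$-completeness. Since both Theorems \ref{thm:upper-bound} and \ref{thm:lower-bound} are already proved, no new technical obstacle arises here; the only things to verify carefully are that the parameter $\eta$ in the upper bound is allowed to be any fixed constant (so it subsumes the $\eta = 1$ instances produced by the hardness reduction), and that the statement of Theorem~\ref{thm:lower-bound} already quantifies over all $a \in \mathbb{Z}^+$ and $b \in \mathbb{N}$, so the completeness claim is uniform in $a$ and $b$.
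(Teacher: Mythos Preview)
Your proposal is correct and follows exactly the paper's approach: the paper states that Theorems~\ref{thm:upper-bound} and~\ref{thm:lower-bound} ``immediately entail'' the corollary, and your write-up just unpacks that one-line justification, including the observation that the $1$-simple instances produced by the hardness reduction are $\eta$-simple for every $\eta\geq 1$.
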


\section{Verifying Resilience in Maude}
\label{sec:implementation}
To experiment with resilience, we specified our running example of a travel planning scenario in the Maude rewriting logic language \cite{clavel-etal-07maudebook}. In contrast to the multiset rewriting representation, the Maude specification uses data structures, not facts, to represent system structure and state. The passing of time is modeled using rule duration.  For example, the rule that models taking a flight takes time according to the duration of the flight. These rules combine an instantaneous rule with a time-passing rule. These design decisions, together with relegating as much as possible to equational reasoning, help reduce the search state space.

In the travel system scenarios, the goal is to attend a given set of events.  System updates change flight schedules; goal updates either change event start time or duration, or add an event. Execution traces terminate when the last event is attended or an event is missed.  Thus, for simplicity, we fix $a+b$ to be the end of the last possible event and leave it implicit. In the following, we describe the representation of key elements of the travel system specification: system state, execution rules, and updates. We then explain the algorithm for checking $(n,a,b)$-resilience and report on some simple experiments.

\subsubsection{Representing travel status.}
The two main sorts in the travel system specification are \smalltt{Flight} and \smalltt{Event}. A term \smalltt{fl(cityD,cityA,fn,depT,dur)} represents a flight, where \smalltt{cityD} and \smalltt{cityA} are the departure and arrival cities, \smalltt{fn} is the flight number (a unique identifier), \smalltt{depT} is the departure time, and \smalltt{dur} is the duration. The departure time and duration are represented by hour-minute terms \smalltt{hm(h,m)}. For simplicity, flights are assumed to go at the same time every day, and all times are in GMT. A flight instance (sort \smalltt{FltInst}) represents a flight on a specific date by a term: \smalltt{fi(flt,dtDep,dtArr)} where \smalltt{flt} is a flight, \smalltt{dtDep} and \smalltt{dtArr} are date-time terms representing the date and time of departure and arrival respectively. A date-time term has the form \smalltt{dt(yd,hm)} where \smalltt{yd} is a year-day term \smalltt{yd(y,d)} and \smalltt{hm} is an hour-minute term as above.

An event is represented by a term \smalltt{ev(eid,city,loc,yd,hm,dur,opt)}, where \smalltt{eid} is a unique (string) identifier and the \smalltt{opt} Boolean specifies if the event is optional; the other arguments are as above. \longversion{We form lists of flights (sort \smalltt{FlightL}), sets of flight lists (sort \smalltt{FltLS}), flight instance lists (sort \smalltt{FltInstL}), sets of flight instance lists (sort \smalltt{FltInstLS}) and sets of events (sort \smalltt{Events}) as usual using associative, commutative, and id axioms.} A term of the form \verb|{conf}| (sort \smalltt{Sys}), where \smalltt{conf} (sort \smalltt{Conf}) is a multiset of configuration elements, represents a system execution state. The sorts of configuration elements are \smalltt{TConf}, \smalltt{Log}, and update descriptions. \longversion{A system state is required to have one \smalltt{TConf} element, and there can be at most one log and one update element.} Terms of sort \smalltt{TConf} represent a traveler's state, with one of three forms: \\
\hspace*{3em} \smalltt{tc(dt,city,loc,evs)} -- planning \\
\hspace*{3em} \smalltt{tc(dt,city,loc,evs,ev,fltil)} -- executing \\
\hspace*{3em} \smalltt{tcCrit(dt,city,loc,evs,ev,reason)} -- critical \\
\noindent Here, \smalltt{dt} is a date-time term, the travelers current date and time, and \smalltt{city} tells what city the traveler is currently in. The term \smalltt{loc} gives the location within the city, either the \smalltt{airport} or the city \smalltt{center}. The term \smalltt{evs} is the set of events to be attended, while \smalltt{ev} is the next event to consider. The term \smalltt{fltil} is the flight instance list chosen to get from \smalltt{city} to the location of \smalltt{ev}.  The constructor \smalltt{tcCrit} signals a critical configuration in which a required event has been missed.  

An update description is a term of the form
\smalltt{di(digs)} or \smalltt{di(digs,n)} where \smalltt{digs} is a set of digressions. Each digression describes an update to be applied by an update rule, and \smalltt{n} bounds the number of updates that can be applied. Two kinds of updates are currently implemented: flight/system updates and event/goal updates. The flight updates are: \smalltt{cancel}, which cancels the current flight, \smalltt{delay(hm(h,m))}, which delays the current flight by \smalltt{h} hours, \smalltt{m} minutes, and \smalltt{divert(city0,city1)}, which diverts the current flight from \smalltt{city0} to \smalltt{city1}, where the current flight is the first element of the flight instance list of an executing \smalltt{TConf} term. The event/goal updates are: \smalltt{edEvS(hm(h,m))}, which starts the current event \smalltt{h} hours, \smalltt{m} minutes earlier, \smalltt{edEvD(hm(h,m))}, which extends the current event duration by \smalltt{h} hours, \smalltt{m} minutes, and \smalltt{addEv(eid)}, which adds the event with id \smalltt{evid} to the set of pending events.

Flight updates are only applied to the next flight the traveler is about to take.  Similarly, the changes in event start time or duration are only applied to the next event to attend. This is a simplified setting, but sufficiently illustrates our formalism; more complex variations are possible. Lastly, an element of sort \smalltt{Log} is a list of log items. It is used to record updates, flights taken, and events attended or missed. Among other things, when searching for flights, it is used to know which flights have been canceled.

\subsubsection{Rewrite rules.}
There are five system rules (\smalltt{plan}, \smalltt{noUFlts}, \smalltt{flt}, \smalltt{event}, and \smalltt{replan}) and two update rules (\smalltt{fltDigress} and \smalltt{evDigress}, for flight/system updates and event/goal updates, respectively). The \smalltt{plan} rule picks the event, \smalltt{ev}, with the earliest start time from \smalltt{nevs} and (non-deterministically) selects a list of flight instances, \smalltt{fltil}, from the set of flight instance lists arriving at the event city before the start time. The set of possible flights is stored in a constant \smalltt{FltDB}. \smalltt{log1} is \smalltt{log} with an item recording the rule firing added. The conditional \smalltt{if ...} does the above computing.

\vspace{2pt}
\begin{small}
\smalltt{crl [plan]:  \{tc(dt,city, loc, nevs) log\}}

\hspace{5em} \smalltt{ => \{tc(dt, city, loc, evs0, ev, fltil) log1\}  if ...}  
\end{small}
\\[2pt]
The rule \smalltt{noUFlts} handles cases in which there is no usable flight instance list given the traveler time and location and the time and location of the next event. If \smalltt{ev} is optional then it is dropped (recording this in the log) and the rule \smalltt{plan} is applied to the remaining events; otherwise, the configuration becomes critical and execution terminates.
\omitthis{
\begin{small}
\begin{verbatim}  
 crl[noUFlts]:
  tc(dt,city, at, nevs) log
  =>
  (if getOpt(ev)
  then tc(dt,city,at,evs0) log ; li(getId(ev),false)
  else tcCrit(dt, city, at, evs0, ev, NoUsableFltl") log
  fi)
  if {evs0,ev} := nextEv(nevs)
\end{verbatim}  
\end{small}
}
The rule \smalltt{flt} models taking the next flight, assuming the flight departure time is later than the traveler's current time. This rule updates the traveler's time to the flight arrival time \smalltt{dtArr} and traveler's city to the destination \smalltt{city1}. Then, the flight instance taken is removed from the list. \\
\vspace{2pt}
\begin{small}
\smalltt{
crl [flt]: \{tc(dt, city0, airport, evs,ev, flti; fltil) conf\}}

\smalltt{  =>   \{tc(dtArr, city1, airport, evs, ev, fltil) conf1\}   if  ....}  
\end{small}

\noindent
The \smalltt{event} rule (not shown) models
attending the currently selected event. It can be
applied when the
traveler city is the same as the event city and the
current time is not after the event start. As for
the flight rule, the current time is updated to the
event end time and the traveler returns to the
airport.  If the traveler arrives at the event
city too late, as for the \smalltt{noUFlts} rule,
if the event is optional, the \smalltt{event} rule drops the event and enters a log item, otherwise
it produces a critical \smalltt{TConf}.
\omitthis{
\begin{small}
\begin{verbatim}  
**** attend the event if in time
 crl [event]:
  {tc(dt, city0, airport, evs,ev,fltil) conf}
  =>
  {conf1}
\end{verbatim}  
\end{small}
}
The \smalltt{replan} rule handles the situation in which the traveler city is not the event city and the next flight, if any, does not depart from the traveler city or has been missed. The pending flight instance list is dropped, the event is put back in the event set, and a log item is added to the log.

\omitthis{
\begin{small}
\begin{verbatim}  
 crl [replan]:
  {tc(dt, city0, airport, evs,ev,fltil) conf}
  =>
  {tc(dt, city0, airport, evs ev) conf1 }
  if ...   .
\end{verbatim}  
\end{small}
}

The update rule \smalltt{fltDigress} only applies if the digression counter is greater than zero and the rule decrements the counter.  A flight digression, \smalltt{fdig}, is non-deter\-mi\-nisti\-cally selected from the available digressions (the first argument to \smalltt{di}).

\vspace{2pt}
\begin{small}
\smalltt{
 crl [fltDigress]:
   \{tconf di(fdig digs, s n)  conf\}}
   
  \smalltt{ =>
    \{tconf1 di(digs fdig,n) conf1\} }

\qquad  \smalltt{  if tc(dt, city0, airport, evs,ev, flti ; fltil) := tconf}

\qquad  \smalltt{ $\land$ city0 =/= getCity(ev)}

\qquad  \smalltt{ $\land$ tconf1 conf1 := applyDigression(tconf,conf,fdig)}  
\end{small}

\noindent
The first condition exposes the structure of \smalltt{tconf} to ensure there is a pending flight to update. The auxiliary function \smalltt{applyDigression} specifies the result of the update.  For example, the \smalltt{cancel} update removes \smalltt{flti} from the list and adds a log item recording that this flight instance is cancelled. The case where the update description has the form \smalltt{di(fdig digs)} is similar, except here \smalltt{fdig} is removed when applied, and updating stops when there are no more update elements in the set. Similarly, the rule \smalltt{evDigress} non-deterministically selects an event digression from the configuration's digression set and applies the auxiliary function \smalltt{applyEvDigress} to determine the effect of the update.  It only applies if the update counter is greater than zero, and the \smalltt{TConf} component has a selected next event.

\omitthis{
\begin{small}
\begin{verbatim}  
    crl [evDigress]:
        { tconf di(edig digs, s n)  conf}
        =>
        { tconf1 di(digs edig,n) conf1 }
        if tc(dt, city0, airport, evs,ev, fltil) := tconf
        /\ tconf1 conf1 := applyEvDigress(tconf,conf,edig)
        [print "\nevent digression  " edig "\n ev "  ev] .   
 \end{verbatim}  
 \end{small}
 }
 
A planning scenario is defined by an initial system configuration \smalltt{iSys}, a database of flights \smalltt{FltDB}, a database of events \smalltt{EvDB}, and an update description. A trace \smalltt{iSys -TR-> xSys} is a sequence of applications of rule instances from the travel rules \smalltt{TR}, leading from \smalltt{iSys} to \smalltt{xSys}. It is a \emph{compliant goal trace} if \smalltt{xSys} satisfies the goal condition (\smalltt{goal}) that the traveler configuration has no remaining events, and no required events have been missed. Formally, the traveler component of \smalltt{xSys} must have the form \smalltt{tc(dt,city,loc,mtE)}, since, when a required event is missed, it is rewritten to a term of the form \smalltt{tcCrit(dt,city,loc,evs,ev,comment)}.

\longversion{

\subsubsection{Simple analyses.}
The question of whether or not there exists a compliant trace for a given initial system configuration \texttt{iSys} can be answered by Maude's search command:
\begin{small}
\begin{verbatim}  
    search[1] {iSys nilL} =>!
              {tc:TConf l:Log} such that goal(tc:TConf) .
\end{verbatim}  
\end{small}
\noindent The \texttt{!} after the arrow restricts the search to look for terminal states.  The \texttt{goal} predicate holds only for planning \texttt{TConf} terms in which the events argument is the empty event set. Given an update description \texttt{di(dig)}, we can ask how many reachable terminal states are goal states if this one update is applied at some point using the following search command.
\begin{small}
\begin{verbatim}
    search {iSys nilL di(dig) } =>! 
           {tc:TConf l:Log di(mtD)} such that goal(tc:TConf) .
\end{verbatim}
\end{small}
\noindent Dually, we can ask how many terminal states fail the goal condition.
\begin{small}
\begin{verbatim}  
    search {iSys nilL di(dig) } =>! 
           {tc:TConf l:Log di(mtD)}
                such that not(goal(tc:TConf)) .
\end{verbatim}  
\end{small}
As a warm-up to our experiments with full $(n,a,b)$-resilience in Subsection \ref{subsec:full-nab-implementation}, we begin by studying the following weaker notion. We say that a trace satisfies \textit{simple $n$-resilience} if it reaches a goal state after $n$ updates from a given set were applied at some point in the trace. The following search command tells us the number of terminal goal states reached by simple n-resilient traces

\begin{small}
\begin{verbatim}  
    search {iSys nilL di(allDigs,n) } =>! 
           {tc:TConf l:Log di(allDigs,0)}
                such that goal(tc:TConf) .
\end{verbatim}  
\end{small}

\noindent
where \texttt{allDigs} is the set of all update rules to be considered.

\subsubsection{Simple $n$-resilience experiments.}
We conducted two sets of experiments to check if, by starting early enough, there are traces satisfying simple $n$-resilient to $n$ digressions for different choices of $n$ and different sets of digressions. First, we looked at resilience against a fixed single update to see which updates ``give the most trouble.'' Second, we looked at simple $n$-resilience against flight digressions for various values of $n$.

We considered a scenario with events in 3 cities: \texttt{Ev1} starting at \texttt{DT1}, lasting 4 hours, \texttt{Ev2} starting at \texttt{DT2}, lasting 5 days (120h), and \texttt{Ev3} starting at \texttt{DT3} and lasting 4h. Two start dates were considered: \texttt{DAY0} is chosen to be able to arrive at the first event with a little time to spare, and \texttt{DAY1} is a day earlier. The first set of experiments were instances of the following search pattern.

\begin{center}
\begin{small}
\begin{verbatim}  
    search {tc(dt(yd(23,DAY),hm(16,0)),SF,airport,EVS)
                di(UPDATE) nilL} =>! 
           {tc:TConf di(mtD) l:Log} such that PROP .
\end{verbatim}  
\end{small}
\end{center}

\noindent The results are summarized in Table \ref{table:single-update-experiment}. \texttt{UPDATE} is one of the three flight digressions, PROP is goal(tc:TConf) (indicated by \texttt{g}) or \texttt{not(goal(tc:TConf))} (indicated by \texttt{ng}). In both cases, it is required that the single update was applied using the pattern element  \texttt{di(mtD)}. If \verb|#Ev| is 2, \texttt{EVS} consists of \texttt{Ev1} and \texttt{Ev2}.  If \verb|#Ev| is 3, \texttt{EVS} also includes \texttt{Ev3}. The \texttt{(g,ng)} column gives the number of terminal states that (are,are not) goal states. The \verb|#states| columne gives the total number of states searched, and the \texttt{ms} columns give the total search time in milliseconds.

\begin{table}
\centering
\begin{tabular}{c@{\hspace{6pt}}c@{\hspace{12pt}}c@{\hspace{12pt}}c@{\hspace{12pt}}c@{\hspace{12pt}}c}
\toprule
\textbf{\#Ev} & \textbf{DAY} & \textbf{UPDATE} & \textbf{g,ng} & \textbf{states} & \textbf{ms} \\
\midrule
2 & 247 & none   & 6,0  & 27  & 8.10 \\
  &     & delay  & 12,1 & 65  & 18.16 \\
  &     & cancel & 18,1 & 107 & 24.20 \\
  & 247 & divert & 0,0  & 27  & 9.10 \\
  & 246 & divert & 18,0 & 239 & 28.31 \\
\midrule
3 & 247 & none   & 54,0  & 243  & 35.27 \\
  &     & delay  & 279,0 & 1263 & 100.82 \\
  &     & cancel & 711,0 & 3308 & 210.148 \\
  & 247 & divert & 0,0   & 243  & 27.31 \\
  & 246 & divert & 24,0  & 2183 & 124.102 \\
\bottomrule \\
\end{tabular}
\caption{Summary of single-update resilience experiments}
\label{table:single-update-experiment}
\vspace*{-1.5em}
\end{table}

The state space differs for the different updates because they are applicable under different conditions. The results for \texttt{divert} look different for two reasons. First, it is more disruptive, as can be seen by the increase in goal solutions when using the earlier start date. Second, we have picked a specific pair of cities, and it will be only possible to apply the update if the flight plan includes a flight to the target city.

The second set of experiments were instances of the following search pattern.

\begin{small}
\begin{verbatim}  
    search {tc(dt(yd(23,DAY),hm(16,0)),SF,airport,EVS)
                di(allDi,N) nilL} =>!
           {tc:TConf di(digs:DigressS,0) l:Log} such that PROP .
\end{verbatim}  
\end{small}
\noindent
The results are summarized in Table \ref{table:n-resilience-experiment}. Note that \texttt{N} is the number of update applications, and in both PROP cases it is required that \texttt{N} updates were applied (the counter is 0). The remaining parameters and columns are interpreted as in the previous table.

\begin{table}
\centering
\begin{tabular}{c@{\hspace{6pt}}c@{\hspace{12pt}}c@{\hspace{12pt}}c@{\hspace{12pt}}c@{\hspace{12pt}}c}
\toprule
\textbf{\#Ev} & \textbf{DAY} & \textbf{N} & \textbf{g,ng} & \textbf{states} & \textbf{ms} \\
\midrule
2 & 247 & 1 & 30,2   & 145   & 29.28 \\
  & 246 & 1 & 399,4  & 1790  & 145.115 \\
  & 247 & 2 & 72,2   & 399   & 48.48 \\
  & 246 & 2 & 2208,29 & 10370 & 717.541 \\
  & 247 & 3 & 96,12  & 747   & 77.74 \\
  & 246 & 3 & 8082,198 & 40906 & 2930.2438 \\
\midrule
3 & 247 & 1 & 702,2   & 3025   & 208.158 \\
  & 247 & 2 & 4860,2  & 20973  & 1445.923 \\
  & 246 & 2 & 60912,29 & 263450 & 18019.23278 \\
  & 247 & 3 & 23868,12 & 105597 & 13525.13590 \\
\bottomrule \\
\end{tabular}
\caption{Summary of simple $n$-resilience experiments}
\label{table:n-resilience-experiment}
\vspace*{-1.5em}
\end{table}

\noindent One interesting observation is that the search space is most sensitive to the number of events (intuitively, the ``goal complexity'') and the starting date.
  
}
 
\subsubsection{Checking $(n,a,b)$-resilience.}
\label{subsec:full-nab-implementation}
Checking $(n,a,b)$-resilience in the travel planning system is implemented by the equationally-defined function \smalltt{isAbRes} using Maude's reflection capability and strategy language. As previously mentioned, the upper bound on time is implicitly determined by the times and durations of available events, not treated as a parameter.

The function \smalltt{isAbRes} checks $(n,a,b)$-resilience by first using \smalltt{metaSearch} to find a candidate goal state, then \smalltt{metaSearchPath} gives the corresponding compliant goal trace\footnote{Execution terminates if a critical state is reached, so paths to goal states are always compliant.} The candidate trace is converted to a rewrite strategy (representing the trace's list of rule instances). The function \smalltt{checkAbRes} iterates through the initial prefixes of the strategy, using \smalltt{metaSRewrite} to follow the trace prefix. This implements a check for reaction traces at all possible points of update rule application (cf. Definition \ref{d-trace-ab-recursive}). For each state resulting from executing a prefix, the function \smalltt{checkDigs} is called to apply each one of the available updates, and then we invoke \smalltt{isAbRes} to check for an $(n-1,a-d,b)$-resilient extension trace, where $d$ is the number of time steps up to the end of the prefix. If $n$ is zero, \smalltt{abResCheck} simply finds a compliant goal trace. If no $(n-1,a-d,b)$-resilient extension trace can be found, then the current candidate trace is rejected, and \smalltt{isAbRes} continues searching for the next candidate trace. If the search for candidate traces fails, then the system under consideration is not $(n,a,b)$-resilient. If the check for an $(n-1,a-d,b)$-resilient extension trace succeeds for every update of every prefix execution, then the strategy is returned as a witness for $(n,a,b)$-resilience. We tested $(n,a,b)$-resilience to flight/system updates and event/goal updates with instances of the following command.
\\
\resizebox{\hsize}{!}{\smalltt{ red isAbRes(['TRAVEL-SCENARIO],N,allDi,SYST,patT,tCond,uStrat,0).}
}
\\
\resizebox{\hsize}{!}{\smalltt{
red isAbRes(['TRAVEL-SCENARIO],1,allEv,iSysT,patT,tCond,ueStrat,0).}
}

\noindent
The results are summarized in Table \ref{fig:exp-results-nab}. Note that \smalltt{N} is the number of updates (1, 2, or 3), \smalltt{SYST} is (the meta representation of) an initial state with a starting day that is as late as possible to succeed if nothing goes wrong (247) or one day earlier (246) and 2 or 3 events. \smalltt{patT} and \smalltt{tCond} are the metaSearch pattern and condition arguments and \smalltt{uStrat} is used to construct the update rule strategy. In the summary tables the \smalltt{R?} indicates the result of \smalltt{isAbRes}: \smalltt{Y} for yes (a non-empty strategy is returned) and \smalltt{N} for no. A dash indicates the experiment not done (because the check fails for smaller N). Lastly, \smalltt{allEv} is the set of all implemented event update descriptions.

\begin{figure}[t!]
\centering
\begin{subfigure}{.45\textwidth}
  \centering
\vspace{-5pt}

\begin{tabular}{c@{\hspace{5pt}}c@{\hspace{3pt}}c@{\hspace{3pt}}c@{\hspace{3pt}}c@{\hspace{3pt}}c@{\hspace{3pt}}c@{\hspace{1pt}}}
\toprule
\textbf{N:} & \textbf{1} & & \textbf{2} & & \textbf{3} & \\
\midrule
\textbf{2ev} & \textbf{R?} & \textbf{time} & \textbf{R?} & \textbf{time} & \textbf{R?} & \textbf{time} \\
\midrule
247 & N & 86ms & - & - & - & - \\
246 & Y & 81ms & Y & 147ms & N & 7476ms \\
\midrule
\textbf{3ev} & \textbf{R?} & \textbf{time} & \textbf{R?} & \textbf{time} & \textbf{R?} & \textbf{time} \\
\midrule
247 & N & 1400ms & - & - & - & - \\
246 & Y & 325ms & Y & 685ms & NF & - \\
\bottomrule \\
\end{tabular}

\vspace{-1em}
\caption{flight/system update rules}
\label{fig:gur1}
\end{subfigure}
\qquad \quad 
\begin{subfigure}{.45\textwidth}
  \centering
\vspace{-5pt}

\begin{tabular}{c@{\hspace{5pt}}c@{\hspace{3pt}}c@{\hspace{3pt}}c@{\hspace{3pt}}c@{\hspace{3pt}}c@{\hspace{3pt}}c@{\hspace{1pt}}}
\toprule
\textbf{N:} & \textbf{1} & & \textbf{2} & & \textbf{3} & \\
\midrule
\textbf{2ev} & \textbf{R?} & \textbf{time} & \textbf{R?} & \textbf{time} & \textbf{R?} & \textbf{time} \\
\midrule
247 & Y & 78ms & N & 77ms & - & - \\
246 & Y & 98ms & N & 34800ms & - & - \\
\midrule
\textbf{3ev} & \textbf{R?} & \textbf{time} & \textbf{R?} & \textbf{time} & \textbf{R?} & \textbf{time} \\
\midrule
247 & Y & 143ms & N & 2627ms & - & - \\
246 & Y & 220ms & Y & 633ms & Y & 2634ms \\
\bottomrule \\
\end{tabular}

\vspace{-1em}
\caption{event/goal update rules}
\label{fig:gur2}
\end{subfigure}
\vspace{-3mm}
\caption{Summary of $(n,a,b)$-resilience experiments}
\label{fig:exp-results-nab}
\vspace{-4mm}
\end{figure}

\section{Conclusions and Related Work}
\label{sec:future}
We have shown that, for $\varmax$-simple PPSs with traces containing only facts of bounded size, the $(n,a,b)$-resilience problem is $\Sigma^\P_{2n+1}$-complete. In \cite{Alturki2022resilience}, we showed that the version of this problem without time bounds is $\PSPACE$-complete for balanced systems with traces containing only facts of bounded size. In addition to the formal model and complexity results, we have implemented automated verification of time-bounded resilience using Maude. Resilience has been studied in diverse areas such as civil engineering~\cite{bozza2017urban}, disaster studies~\cite{manyena2006concept}, and environmental science~\cite{folke2006resilience}. Formalizations of resilience are often tailored to specific applications \cite{Sharma-et-al-2014,Banescu-Ochoa-Pretschner-2015,Mouelhi-et-al-2019,Huang-et-al-2020,Madni-Erwin-Sievers-2020} and cannot be easily adapted to different systems. However, while we illustrated time-bounded resilience via an example of a PPS modeling a flight planning scenario, our earlier work has studied similar properties for a diverse range of other critical, time-sensitive systems, from collaborative systems subject to governmental regulations~\cite{kanovich17mscs}, to distributed unmanned aerial vehicles (UAV) performing safety-critical tasks~\cite{kanovich16formats,kanovich2021Guttman}. The strength of our formalism is its flexibility in modeling a wide range of multi-agent systems.

Interest in resilience and related concepts such as robustness \cite{bruneau2003framework}, recoverability \cite{kanovich2021Guttman,nigam-talcott-23tase}, fault tolerance \cite{koren2020fault}, and reliability \cite{bauer2011design}, has grown in recent years. In \cite{zdancewic2001robust,myers2006enforcing}, the authors define a notion of robustness in which a system is robust if the actions taken by an adversary cannot force the system to release more information than it would in the absence of the adversary's actions. In \cite{nigam-talcott-23tase}, the authors study \emph{time-bounded recovery} for \emph{logical scenarios}, which are families of system states represented by patterns whose variables are constrained to describe an operating domain, and where recoverability is parameterized by an ordered set of safety conditions. Intuitively, a $t$-recoverable logical scenario is one which, when operating in normal mode, can recover from a lower-level safety condition to an optimal safety condition within $t$ time steps, without reaching an unsafe condition. Like our formalization, $t$-recoverability concerns recovery from a deviation from normal execution. The primary distinction in \cite{nigam-talcott-23tase} is that deviations are internal to the system, rules update the model state and compute control commands, and enabled rules must fire before time passes, as is common in real-time systems.

Our definition of time-bounded resilience (Definition \ref{d-trace-ab-recursive}) can be seen as a modification of \cite[Definitions 9-10]{Alturki2022resilience}, with time parameters $a$ and $b$ and taking into account both system and goal updates. Here, the recoverability conditions from \cite{Alturki2022resilience} are simplified to the total relation on configurations. Further investigation of recoverability conditions and resilience with respect to update rules that consume or create critical facts is left for future work. Another avenue of future work is to find conditions beyond $\varmax$-simplicity which allow for polynomial-time solvability of the trace compliance problem. We also plan to study time-bounded resilience problems with respect to update rules that consume or create critical facts, and other variations involving, \eg, real-time models and infinite traces. We are interested in the relationship between resilience and other properties of time-sensitive distributed systems~\cite{kanovich2021Guttman}, such realizability, recoverability, reliability, and survivability, as well as in specific applications of resilience, where further implementation results could provide interesting insights.

\subsubsection{Acknowledgments.} We thank Vivek Nigam for many insightful discussions during the early part of this work. Kanovich was partially supported by EPSRC Programme Grant EP/R006865/1: “Interface Reasoning for Interacting Systems (IRIS)”. Scedrov was partially supported by the U. S. Office of Naval Research under award number N00014-20-1-2635 during the early part of this work. Talcott was partially supported by the U. S. Office of Naval Research under award numbers N00014-15-1-2202 and N00014-20-1-2644, and NRL grant N0017317-1-G002.

\bibliographystyle{abbrv}

\begin{thebibliography}{10}

\bibitem{Alturki2022resilience}
M.~A. Alturki, T.~{Ban Kirigin}, M.~Kanovich, V.~Nigam, A.~Scedrov, and C.~Talcott.
\newblock On the formalization and computational complexity of resilience problems for cyber-physical systems.
\newblock In {\em Theoretical Aspects of Computing--ICTAC 2022: 19th International Colloquium, Tbilisi, Georgia, September 27--29, 2022, Proceedings}, pages 96--113. Springer, 2022.

\bibitem{arora2009complexity}
S.~Arora and B.~Barak.
\newblock {\em Complexity theory: A modern approach}.
\newblock Cambridge University Press Cambridge, 2009.

\bibitem{Banescu-Ochoa-Pretschner-2015}
S.~Banescu, M.~Ochoa, and A.~Pretschner.
\newblock A framework for measuring software obfuscation resilience against automated attacks.
\newblock In {\em 2015 IEEE/ACM 1st International Workshop on Software Protection}, pages 45--51, 2015.

\bibitem{bauer2011design}
E.~Bauer.
\newblock {\em Design for reliability: information and computer-based systems}.
\newblock John Wiley \& Sons, 2011.

\bibitem{bennaceur2019modelling}
A.~Bennaceur, C.~Ghezzi, K.~Tei, T.~Kehrer, D.~Weyns, R.~Calinescu, S.~Dustdar, Z.~Hu, S.~Honiden, F.~Ishikawa, Z.~Jin, J.~Kramer, M.~Litoiu, M.~Loreti, G.~Moreno, H.~Müller, L.~Nenzi, B.~Nuseibeh, L.~Pasquale, W.~Reisig, H.~Schmidt, C.~Tsigkanos, and H.~Zhao.
\newblock Modelling and analysing resilient cyber-physical systems.
\newblock In {\em 2019 IEEE/ACM 14th International Symposium on Software Engineering for Adaptive and Self-Managing Systems (SEAMS)}, pages 70--76, 2019.

\bibitem{bloomfield2020towards}
R.~Bloomfield, G.~Fletcher, H.~Khlaaf, P.~Ryan, S.~Kinoshita, Y.~Kinoshit, M.~Takeyama, Y.~Matsubara, P.~Popov, K.~Imai, et~al.
\newblock Towards identifying and closing gaps in assurance of autonomous road vehicles--a collection of technical notes part 1.
\newblock {\em arXiv preprint arXiv:2003.00789}, 2020.

\bibitem{bozza2017urban}
A.~Bozza, D.~Asprone, and F.~Fabbrocino.
\newblock Urban resilience: A civil engineering perspective.
\newblock {\em Sustainability}, 9(1), 2017.

\bibitem{bruneau2003framework}
M.~Bruneau, S.~E. Chang, R.~T. Eguchi, G.~C. Lee, T.~D. O'Rourke, A.~M. Reinhorn, M.~Shinozuka, K.~Tierney, W.~A. Wallace, and D.~Von~Winterfeldt.
\newblock A framework to quantitatively assess and enhance the seismic resilience of communities.
\newblock {\em Earthquake spectra}, 19(4):733--752, 2003.

\bibitem{caminiti2017resilient}
S.~Caminiti, I.~Finocchi, E.~G. Fusco, and F.~Silvestri.
\newblock Resilient dynamic programming.
\newblock {\em Algorithmica}, 77(2):389–425, Feb 2017.

\bibitem{clavel-etal-07maudebook}
M.~Clavel, F.~Dur\'an, S.~Eker, P.~Lincoln, N.~Mart\'i-Oliet, J.~Meseguer, and C.~Talcott.
\newblock {\em All About Maude: A High-Performance Logical Framework}, volume 4350 of {\em LNCS}.
\newblock Springer, 2007.

\bibitem{cunningham2014resilient}
D.~Cunningham, D.~Grove, B.~Herta, A.~Iyengar, K.~Kawachiya, H.~Murata, V.~Saraswat, M.~Takeuchi, and O.~Tardieu.
\newblock Resilient x10: Efficient failure-aware programming.
\newblock {\em SIGPLAN Not.}, 49(8):67–80, feb 2014.

\bibitem{durgin04jcs}
N.~A. Durgin, P.~Lincoln, J.~C. Mitchell, and A.~Scedrov.
\newblock Multiset rewriting and the complexity of bounded security protocols.
\newblock {\em Journal of Computer Security}, 12(2):247--311, 2004.

\bibitem{eigner2021towards}
O.~Eigner, S.~Eresheim, P.~Kieseberg, L.~D. Klausner, M.~Pirker, T.~Priebe, S.~Tjoa, F.~Marulli, and F.~Mercaldo.
\newblock Towards resilient artificial intelligence: Survey and research issues.
\newblock In {\em 2021 IEEE International Conference on Cyber Security and Resilience (CSR)}, pages 536--542, 2021.

\bibitem{ferarro2010experimental}
U.~Ferraro-Petrillo, I.~Finocchi, and G.~F. Italiano.
\newblock Experimental study of resilient algorithms and data structures.
\newblock In P.~Festa, editor, {\em Experimental Algorithms}, pages 1--12, Berlin, Heidelberg, 2010. Springer Berlin Heidelberg.

\bibitem{folke2006resilience}
C.~Folke.
\newblock Resilience: The emergence of a perspective for social–ecological systems analyses.
\newblock {\em Global Environmental Change}, 16(3):253--267, 2006.
\newblock Resilience, Vulnerability, and Adaptation: A Cross-Cutting Theme of the International Human Dimensions Programme on Global Environmental Change.

\bibitem{garey1979computers}
M.~R. Garey and D.~S. Johnson.
\newblock {\em Computers and Intractability: A Guide to the Theory of NP-Completeness (Series of Books in the Mathematical Sciences)}.
\newblock W. H. Freeman, first edition edition, 1979.

\bibitem{goel2024adversarial}
S.~Goel, S.~Hanneke, S.~Moran, and A.~Shetty.
\newblock Adversarial resilience in sequential prediction via abstention.
\newblock {\em Advances in Neural Information Processing Systems}, 36, 2024.

\bibitem{hirshfeld2004logics}
Y.~Hirshfeld and A.~Rabinovich.
\newblock Logics for real time: Decidability and complexity.
\newblock {\em Fundamenta Informaticae}, 62(1):1--28, 2004.

\bibitem{Huang-et-al-2020}
W.~Huang, Y.~Zhou, Y.~Sun, A.~Banks, J.~Meng, J.~Sharp, S.~Maskell, and X.~Huang.
\newblock Formal verification of robustness and resilience of learning-enabled state estimation systems for robotics, 2020.

\bibitem{hukerikar2012programming}
S.~Hukerikar, P.~C. Diniz, and R.~F. Lucas.
\newblock A programming model for resilience in extreme scale computing.
\newblock In {\em IEEE/IFIP International Conference on Dependable Systems and Networks Workshops (DSN 2012)}, pages 1--6, 2012.

\bibitem{kanovich16formats}
M.~Kanovich, T.~{Ban Kirigin}, V.~Nigam, A.~Scedrov, and C.~Talcott.
\newblock Timed multiset rewriting and the verification of time-sensitive distributed systems.
\newblock In {\em 14th International Conference on Formal Modeling and Analysis of Timed Systems (FORMATS)}, 2016.

\bibitem{kanovich2021Guttman}
M.~Kanovich, T.~{Ban Kirigin}, V.~Nigam, A.~Scedrov, and C.~Talcott.
\newblock On the complexity of verification of time-sensitive distributed systems.
\newblock In D.~Dougherty, J.~Meseguer, S.~A. M{\"o}dersheim, and P.~Rowe, editors, {\em Protocols, Strands, and Logic}, volume 13066 of {\em Springer LNCS}, pages 251--275. Springer International Publishing, 2021.

\bibitem{kanovich17jcs}
M.~Kanovich, T.~{Ban Kirigin}, V.~Nigam, A.~Scedrov, and C.~L. Talcott.
\newblock Time, computational complexity, and probability in the analysis of distance-bounding protocols.
\newblock {\em Journal of Computer Security}, 25(6):585--630, 2017.

\bibitem{kanovich17mscs}
M.~Kanovich, T.~{Ban Kirigin}, V.~Nigam, A.~Scedrov, C.~L. Talcott, and R.~Perovic.
\newblock A rewriting framework and logic for activities subject to regulations.
\newblock {\em Mathematical Structures in Computer Science}, 27(3):332--375, 2017.

\bibitem{koren2020fault}
I.~Koren and C.~M. Krishna.
\newblock {\em Fault-tolerant systems}.
\newblock Morgan Kaufmann, 2020.

\bibitem{koutsoukos2018sure}
X.~Koutsoukos, G.~Karsai, A.~Laszka, H.~Neema, B.~Potteiger, P.~Volgyesi, Y.~Vorobeychik, and J.~Sztipanovits.
\newblock Sure: A modeling and simulation integration platform for evaluation of secure and resilient cyber–physical systems.
\newblock {\em Proceedings of the IEEE}, 106(1):93--112, 2018.

\bibitem{Madni-Erwin-Sievers-2020}
A.~M. Madni, D.~Erwin, and M.~Sievers.
\newblock Constructing models for systems resilience: Challenges, concepts, and formal methods.
\newblock {\em Systems}, 8(1), 2020.

\bibitem{madni2009towards}
A.~M. Madni and S.~Jackson.
\newblock Towards a conceptual framework for resilience engineering.
\newblock {\em IEEE Systems Journal}, 3(2):181--191, 2009.

\bibitem{manyena2006concept}
S.~B. Manyena.
\newblock The concept of resilience revisited.
\newblock {\em Disasters}, 30(4):434--450, 2006.

\bibitem{Mouelhi-et-al-2019}
S.~Mouelhi, M.-E. Laarouchi, D.~Cancila, and H.~Chaouchi.
\newblock Predictive formal analysis of resilience in cyber-physical systems.
\newblock {\em IEEE Access}, 7:33741--33758, 2019.

\bibitem{myers2006enforcing}
A.~C. Myers, A.~Sabelfeld, and S.~Zdancewic.
\newblock Enforcing robust declassification and qualified robustness.
\newblock {\em Journal of Computer Security}, 14(2):157--196, 2006.

\bibitem{neches2013towards}
R.~Neches and A.~M. Madni.
\newblock Towards affordably adaptable and effective systems.
\newblock {\em Systems Engineering}, 16(2):224--234, 2013.

\bibitem{nigam-talcott-23tase}
V.~Nigam and C.~L. Talcott.
\newblock Automating recoverability proofs for cyber-physical systems with runtime assurance architectures.
\newblock In C.~David and M.~Sun, editors, {\em 17th International Symposium on Theoretical Aspects of Software Engineering}, volume 13931 of {\em Lecture Notes in Computer Science}, pages 1--19. Springer, 2023.

\bibitem{olowononi2021resilient}
F.~O. Olowononi, D.~B. Rawat, and C.~Liu.
\newblock Resilient machine learning for networked cyber physical systems: A survey for machine learning security to securing machine learning for cps.
\newblock {\em IEEE Communications Surveys \& Tutorials}, 23(1):524--552, 2021.

\bibitem{papadimitriou07book}
C.~H. Papadimitriou.
\newblock {\em Computational complexity}.
\newblock Academic Internet Publ., 2007.

\bibitem{prasad2022towards}
A.~Prasad.
\newblock {\em {Towards Robust and Resilient Machine Learning}}.
\newblock PhD thesis, Carnegie Mellon University, Apr 2022.

\bibitem{Sharma-et-al-2014}
V.~C. Sharma, A.~Haran, Z.~Rakamaric, and G.~Gopalakrishnan.
\newblock Towards formal approaches to system resilience.
\newblock In {\em 2013 IEEE 19th Pacific Rim International Symposium on Dependable Computing}, pages 41--50, 2013.

\bibitem{stockmeyer1976polynomial}
L.~J. Stockmeyer.
\newblock The polynomial-time hierarchy.
\newblock {\em Theoretical Computer Science}, 3(1):1--22, 1976.

\bibitem{vardi2020efficiency}
M.~Vardi.
\newblock Efficiency vs. resilience: What covid-19 teaches computing.
\newblock {\em Communications of the ACM}, 63(5):9--9, 2020.

\bibitem{zdancewic2001robust}
S.~Zdancewic and A.~C. Myers.
\newblock Robust declassification.
\newblock In {\em Proceedings of the 14th IEEE Workshop on Computer Security Foundations}, CSFW '01, page~5, USA, 2001. IEEE Computer Society.

\end{thebibliography}

\end{document}